\newtheorem{lemma}{Lemma}[section]
\newtheorem{theorem}{Theorem}
\newtheorem{proposition}[lemma]{Proposition}
\theoremstyle{definition}
\theoremstyle{definition}
\newtheorem{remark}[lemma]{Remark}
\theoremstyle{definition}
\newcommand\Tr{{\rm Tr\,}} 
\newcommand\nn\nonumber
\numberwithin{equation}{section}
\newcommand{\R}{\mathbb{R}}
\newcommand{\N}{\mathbb{N}}
\newcommand{\Z}{\mathbb{Z}} 
\newcommand{\C}{\mathbb{C}}
\newcommand{\D}{\mathcal{D}} 
\newcommand{\F}{\mathcal{F}}
\renewcommand{\H}{\mathcal{H}}
\newcommand{\grad}{\nabla}
\newcommand\Trs{{\rm Tr}_0\,}
\newcommand\Tro{{\rm Tr}_\Omega\,}
\newcommand{\td}{\mathrm{d}}
\newcommand{\be}{\begin{equation}}
\newcommand{\bea}{\begin{align}}
\newcommand{\eea}{\end{align}}
\DeclareMathOperator{\tr}{tr}
\begin{document}

\title[Persistence of translational symmetry in the BCS model]{Persistence of translational symmetry in the BCS model with radial pair interaction}

\author{A. Deuchert}

\address{Institute of Science and Technology Austria (IST Austria) \\ Am Campus 1, 3400 Klosterneuburg, Austria
 \\ Email: andreas.deuchert@ist.ac.at}

\author{A. Geisinger}

\address{Mathematisches Institut, Universit\"at T\"ubingen\\ Auf der
  Morgenstelle 10, 72076 T\"ubingen, Germany\\ Email: alissa.geisinger@uni-tuebingen.de}

\author{C. Hainzl}

\address{Mathematisches Institut, Universit\"at T\"ubingen\\ Auf der
  Morgenstelle 10, 72076 T\"ubingen, Germany\\ Email: christian.hainzl@uni-tuebingen.de}

\author{M. Loss}

\address{School of Mathematics, Georgia Tech\\ 686 Cherry Street Atlanta, GA 30332, USA \\ Email: loss@math.gatech.edu}

\begin{abstract}
We consider the two-dimensional BCS functional with a radial pair interaction. We show that the translational symmetry is not broken in a certain temperature interval below the critical temperature. In the case of vanishing angular momentum our results carry over to the three-dimensional case.
\end{abstract}

\maketitle

\section{Introduction}

In 1957 Bardeen, Cooper, and Schrieffer published their famous paper with the title "Theory of Superconductivity", which contained the first, generally accepted, microscopic theory of superconductivity. In recognition of this work they were awarded the Nobel prize in 1972. Originally introduced to describe the phase transition from the normal to the superconducting state in metals and alloys, BCS theory can also be applied to describe the phase transition to the superfluid state in cold fermionic gases. In this situation, one has to replace the usual non-local phonon-induced interaction in the gap equation by a local pair potential. Apart from being a paradigmatic model in solid state physics and in the field of cold quantum gases, the BCS theory of superconductivity, that is, the gap equation and the BCS functional show a rich mathematical structure, which has been well recognized. See \cite{Ode64,BilFan68,Van85,Yan91, McLYan00,Yan05} for works on the gap equation with interaction kernels suitable to describe the physics of conduction electrons in solids and \cite{HHSS08,FHNS07,HaiSei08c,HaiSei08a,BraHaiSei14,FreHaiSei12,FraLem16} for works that treat the translation-invariant BCS functional with a local pair interaction. The gap equation and the BCS functional are related in the way that the former is the Euler-Lagrange equation of the latter. One main question in the study of BCS theory is whether the gap equation
\begin{equation} \label{eq:gapDelta}
\Delta(p) = - \frac{1}{(2\pi)^{d/2}} \int_{\mathbb{R}^d} \hat{V}(p-q) \frac{\tanh\left( E(q)/2T \right)}{E(q)} \Delta(q) \, \text{d}q,
\end{equation}
with $E(q) = \smash{\sqrt{(q^2-\mu)^2 + \left| \Delta(q) \right|^{\vphantom{b}2}}}$ has a non-trivial solution, that is, one with $\Delta \neq 0$. If this is so, the system is said to be in a superconducting/superfluid state. The function $\Delta$ has the interpretation of a spectral gap of an effective mean-field Hamiltonian that is present only in the superconducting/superfluid phase, see the Appendix in \cite{HHSS08} for further explanations. In \cite{HHSS08} it has been demonstrated that, although the gap equation is highly non-linear, the question whether there exists a non-trivial solution can be decided with the help of a linear criterion. To be more precise, it was shown that the existence of a non-trivial solution of the gap equation is equivalent to the fact that a certain linear operator has a negative eigenvalue. Based on a characterization of the critical temperature in terms of this linear operator, its behavior has been investigated in the limit of small couplings and in the low-density limit, see \cite{FHNS07,HaiSei08a} and \cite{HaiSei08b}, respectively. Recently, there has also been considerable interest in the BCS functional with external fields, and in particular, in its connection to the Ginzburg-Landau theory of superconductivity, see \cite{HaiSei12,BraHaiSei16,HaiSch13,FHSS12,FHSS16,Deu17,FraHaiSchSei16,HaiSey16}.

The gap equation in the form stated in Eq.~\eqref{eq:gapDelta} and the related BCS functional can be heuristically derived from Quantum Mechanics by a variational procedure under several simplifying assumptions, see \cite{HHSS08} and the discussion in Section~\ref{sec:mainresults} below. One of these assumptions is that states used in this variational procedure are translation-invariant which leads to a strong simplification of the model. While this approximation is presumably valid in the case of cold fermionic gases with a rotationally-invariant pair interaction and is of great importance when it comes to numerical computations, it is in general hard to justify its validity. See \cite{Lieb94} for examples in the context of solid state physics where this approximation is not valid. From a mathematical point of view one is faced with a functional that is invariant under translations in the sense that spatial translations do not change the energy of a state. Due to the non-linear nature of the functional, minimizers need not be translation-invariant, however. If they are not one says that the translational symmetry of the system is broken. The aim of this work is to prove the absence of translational symmetry breaking in two situations: We start by considering the two-dimensional BCS functional with a radial pair interaction and show that there exists a certain temperature interval below the critical temperature, in which the translational symmetry of the system persists. Afterwards, we realize that our analysis directly carries over to the three-dimensional case if the Cooper-pairs are in an s-wave state. Prior to this work, such a result was known only in the case of $\hat{V} \leq 0$ and not identically zero, see \cite{HaiSei16}.

\section{Main Results}
\label{sec:mainresults}
We consider a sample of fermionic atoms in a cold gas in $d$-dimensional space ($d=2,3$) within the framework of BCS theory. It is convenient to think of the sample as infinite and periodic, since this setting avoids having to deal with boundary conditions at the boundary of the sample. To describe the periodicity we introduce the lattice $\Z^d$ with the unit cell $[0,1]^d = \Omega$. The special form of the lattice does not play any role for us and the proof carries over to an arbitrary Bravais lattice. To not artificially complicate the presentation, we therefore opt for the simplest choice. BCS states are most conveniently described by their generalized one-particle density matrix, that is, by a self-adjoint operator~$\Gamma$ on~$L^2(\R^d) \oplus L^2(\R^d)$ of the form
\begin{align} \label{eq:formGammaper}
\Gamma = \left( \begin{array}{cc} \gamma & \alpha \\ \overline{\alpha} & 1 - \overline{\gamma}  \end{array} \right), 
\end{align}
with $0 \leq \Gamma \leq 1$. Here $\gamma$ and $\alpha$ denote the one-particle density matrix and the Cooper-pair wave function of the state $\Gamma$, respectively. Both of them are represented by periodic operators with period one. In terms of kernels, the latter means that $\gamma(x + u, y + u) = \gamma(x,y)$ and $\alpha(x + u, y + u) = \alpha(x,y)$ for all $u \in \Z^d$ and all $x,y \in \R^d$. In~\eqref{eq:formGammaper}, $\overline{\alpha} = C\alpha C$, where~$C$ denotes complex conjugation. Note that, in particular, $\alpha(x,y) = \alpha(y,x)$ for all $x,y \in \R^d$, due to the self-adjointness of~$\Gamma$. In this setting, it is natural to consider energies per unit volume. Accordingly, we define for a periodic operator~$A$, the trace per unit volume~$\Tro$ by $\Tro[A] = \Tr[\chi_{\Omega}A\chi_{\Omega}]$, where~$\chi_{\Omega}$ denotes the characteristic function of~$\Omega$. We call~$\Gamma$ of the form~\eqref{eq:formGammaper} an \textit{admissible} BCS state if 
$\Tro(- \grad^2 + 1)\gamma < \infty$ and denote the set of admissible BCS states by~$\D$. We will, by a slight abuse of notation, write $(\gamma, \alpha) \in \D$, meaning that the BCS state~$\Gamma$ given by \eqref{eq:formGammaper} is admissible. 

The BCS functional at temperature~$T \geq 0$, with chemical potential~$\mu \in \R$, interaction potential~$V \in L^{2}(\R^d)$ and entropy  
\begin{align*}
S(\Gamma) = - \frac{1}{2}  \Tro [\Gamma\log\Gamma+ \left(1 - \Gamma\right) \log\left(1 - \Gamma\right)],
\end{align*}
is then given by
\begin{align} \label{def:functionalPER}
\F(\Gamma) = \Tro \left[\left(- \grad^2 - \mu\right)\gamma\right] + \int_{\Omega \times \R^d} V(x - y) \vert \alpha(x,y) \vert^2 \, \td(x,y) - TS(\Gamma).
\end{align}
Note that the same functional has been considered in~\cite{FHSS12}, where the periodicity was introduced for ease of comparison with the translation-invariant functional. As already mentioned above, the BCS functional can be heuristically derived from Quantum Mechanics by a variational procedure. To that end, one considers the full free energy functional of the system and restricts attention to quasi-free states only. Due to the Wick rule, the energy and the entropy can then be expressed solely in terms of the generalized one-particle density matrix of the quasi-free state under consideration, see \cite{Lieb94}. If one assumes additionally $SU(2)$-invariance as well as the above periodicity of the state and neglects the direct and the exchange term in the energy, one arrives at Eq.~\eqref{def:functionalPER}. For more details see the Appendix of \cite{HHSS08}.

The translation-invariant BCS functional~$\F^{\mathrm{ti}}$ is obtained from~$\F$ by restricting the set of admissible states to the translation-invariant ones. That is, the kernels of~$\gamma$ and~$\alpha$ take the form $\gamma(x,y) = \gamma(x-y)$ and $\alpha(x,y) = \alpha(x-y)$, respectively. 
We describe translation-invariant BCS states via their momentum representations by $2\times2$ matrices of the form
\begin{align} \label{eq:formGamma}
\hat{\Gamma}(p) = \left( \begin{array}{cc} \hat{\gamma}(p) & \hat{\alpha}(p) \\ \overline{\hat{\alpha}(p)} & 1 - \hat{\gamma}(-p)  \end{array} \right),
\end{align}
for $p \in \R^d$, where the bar denotes complex conjugation and the hats indicate that those objects are Fourier transforms of integral kernels that depend only on $x-y$. Obviously, $\hat{\Gamma}(p)$ satisfies $0 \leq \hat{\Gamma}(p) \leq 1$ for all $ p \in \R^d$. The latter translates to $\vert \hat{\alpha}(p) \vert^2 \leq \hat{\gamma}(p)(1 - \hat{\gamma}(p))$ for $p \in \R^d$ in terms of~$\hat{\gamma}$ and~$\hat{\alpha}$. Note that the fact that $\Gamma$ is self-adjoint implies that $\hat{\alpha}$ is an even function and that $\hat{\gamma}$ is real-valued. A translation-invariant BCS state~$\Gamma$ is admissible if and only if $\hat{\gamma} \in L^1(\R^d, (1+p^2)\, \td p)$ and $\alpha \in H^1(\R^d, \td x)$. By~$\D^{\mathrm{ti}}$ we denote the set of all admissible translation-invariant BCS states. For~$T \geq 0$ the translation-invariant BCS functional with chemical potential~$\mu \in \R$, interaction potential~$V \in L^{2}(\R^d)$ and entropy~$S$, which we can now write as
\begin{align*}
S(\Gamma) = - \frac{1}{2} \int_{\R^d} \tr_{\C^2} \left[\hat{\Gamma}(p)\log\hat{\Gamma}(p) + \left(1 - \hat{\Gamma}(p)\right) \log\left(1 - \hat{\Gamma}(p)\right) \right] \, \td p,
\end{align*}
takes the form
\begin{align} \label{eq:functionalTI}
\F^{\mathrm{ti}}(\Gamma) = \int_{\R^d} (p^2 - \mu)\hat{\gamma}(p) \, \td p + \int_{\R^d} V(x) \vert \alpha(x)\vert^2 \, \td x - TS(\Gamma).
\end{align}
Given a state $\Gamma$, we define the gap function $\Delta$ of that state as the Fourier transform of $2 V(x) \alpha(x)$. One can then show that the gap function of any minimizing BCS state satisfies Eq.~\eqref{eq:gapDelta}, see \cite{HHSS08}. We note that $\F^{\mathrm{ti}}$ was studied in \cite{HHSS08} without the constraint that $\alpha$ is reflection symmetric. The results there hold equally if one works only in the subspace of reflection-symmetric functions in $L^2(\R^d)$, however. In the case of $V = 0$, the translation-invariant BCS functional $\F^{ti}$ is minimized by the pair $(\gamma_0,0)$ where $\hat{\gamma}_0(p) = (1+e^{\beta(p^2-\mu)})^{-1}$. The same statement is true for the periodic BCS functional $\F$. The state $(\gamma_0,0)$ is called the normal state and describes a situation where superfluidity is absent.

It was shown in~\cite[Theorem 1]{HHSS08} that there exists a critical temperature $T_c \geq 0$ such for $T < T_c$, the minimizer of the translation-invariant BCS functional has a non-vanishing Cooper-pair wave function. On the other hand, for $T \geq T_c$, the normal state is the unique minimizer. Additionally, there is a characterization of $T_c$ in terms of a linear operator. To make this statement more explicit, let us introduce the function $K_T : \R^d \to \R$ given by
\begin{align*}
K_T(p) = \frac{p^2 - \mu}{\tanh((p^2 - \mu)/(2T))}.
\end{align*}
Then, $K_T = K_T(-i\grad)$ defines an operator on $L^2(\R^d)$ acting by multiplication with~$K_T(p)$ in Fourier space. The critical temperature of the translation-invariant BCS functional is given by
\begin{align*}
T_c = \inf\lbrace T \geq 0 \ | \ K_T + V \geq 0 \rbrace.
\end{align*}
In other words, $T_c$ is the value of $T$ such that the operator $K_T + V$ has zero as lowest eigenvalue. Observe that this definition makes sense because $K_T$ is monotone increasing in~$T$. The characterization of $T_c$ in terms of a linear operator comes about because a minimizer of the translation-invariant BCS functional $\F^{ti}$ has a non-vanishing Cooper-pair wave function if and only if the normal state is unstable under pair formation. That is, if and only if the second variation of $\F^{ti}$ at $(\gamma_0,0)$ has a negative eigenvalue. The operator $K_T + V$ is exactly the second variation of $\F^{ti}$ at the normal state in the direction of a perturbation with $\gamma = 0$ and $\alpha \not\equiv 0$.  

In this paper, we treat the question whether there is translational symmetry breaking in the BCS model with radial pair interaction $V$. More precisely, we study the minimization problem
\begin{align*}
\inf\left\{ \F(\Gamma) \, \vert \, \Gamma \in \D \right\}
\end{align*}
and we are, in particular, concerned with the question whether the infimum of~$\F$ is attained by the minimizers of the translation-invariant BCS functional. If $\hat{V} \leq 0 $ with $\hat{V}$ not identically zero this is already known to be the case, see \cite{FHSS12,HaiSei16}. In order to study this question, we consider the BCS functional~$\F_{\ell}^{\mathrm{ti}}$ on the sector of translation-invariant BCS states with Cooper-pair wave functions of angular momentum $\ell \in 2\N_0$, that we will define in the next paragraph. Our strategy consists of showing that there exists~$\ell_0$  such that the minimizers of $\F_{\ell_0}^{\mathrm{ti}}$ and~$\F$ coincide under certain assumptions. 

Let us now introduce the functionals~$\F_{\ell}^{\mathrm{ti}}$ in the case $d=2$. They are obtained from $\mathcal{F}^{\mathrm{ti}}$ by restricting the domain to Cooper-pair wave functions of the form \begin{align}\label{eq:formofalpha}
\hat{\alpha}_{\ell}(p) = e^{i\ell\varphi} \sigma_{\ell}(p),
\end{align}
for some $\ell \in 2\Z$, where~$\varphi$ denotes the angle of $p \in \R^2$ in polar coordinates  and~$\sigma_{\ell}$ is a radial function. Recall that~$\alpha$ is an even function, which requires~$\ell$ to be even. As we will see, the Euler-Lagrange equation of $\F^{\mathrm{ti}}$ implies that if $(\gamma, \alpha_{\ell})$ is a minimizer of~$\F^{\mathrm{ti}}$, then~$\hat{\gamma}$ has to be a radial function. Therefore, we define
the BCS functional on the sector of Cooper-pair wave functions of angular momentum~$\ell$ as follows. We make an angular decomposition for $(p,q) \mapsto \hat{V}(p - q)$, that is
\begin{align*}
\hat{V}(p-q) = \sum_{\ell \in \Z} \hat{V}_{\ell}(p,q) e^{i\ell\varphi},
\end{align*}
where~$\varphi$ denotes the angle between~$p$ and~$q$. In other words, this means that
\begin{align}\label{eq:Vell} 
\hat{V}_{\ell}(p,q) = \frac{1}{2\pi} \int_0^{2\pi} e^{-i\ell\varphi} \hat{V}(p-q) \, \td \varphi.
\end{align}
Since~$\hat{V}$ is a radial function, it only depends on the absolute value of its argument, that is, on $\vert p - q\vert = \smash{\sqrt{p^2 + q^2 - 2\vert p \vert \vert q \vert \cos(\varphi)}}$ and we conclude that~$\hat{V}_{\ell}$ is radial in both arguments. Furthermore, observe that $\hat{V}_{\ell} = \hat{V}_{-\ell}$.

Then, the BCS functional~$\F_{\ell}^{\mathrm{ti}}$ on the sector of Cooper-pair wave functions of even angular momentum $\ell \in 2\N_0$ is given by
\begin{align*}
\F_{\ell}^{\mathrm{ti}}(\Gamma_{\ell}) = \int_{\R^2} (p^2 - \mu) & \gamma_{\ell}(p) \, \td p + \int_{\R^2} \int_{\R^2} \overline{\sigma_{\ell}(p)} \sigma_{\ell}(q)  \hat{V}_{\ell}( p, q) \,\td p \td  q - TS(\Gamma_{\ell}),
\end{align*}
where~$V_{\ell}$ is given in~\eqref{eq:Vell} and~$\Gamma_{\ell}$ is determined by the pair $(\gamma_{\ell}, \sigma_{\ell})$ with radial functions~$\gamma_{\ell}$ and~$\sigma_{\ell}$. To be more precise, the domain of~$\F_{\ell}^{\mathrm{ti}}$ is given by
\begin{align*}
\mathcal{D}_{\ell} \coloneqq \big\{ (\gamma_{\ell}, \sigma_{\ell}) \vert \, \gamma_{\ell}, \sigma_{\ell} \text{ radial and } (\gamma_{\ell}, \alpha_{\ell}) \in \D^{\mathrm{ti}}, \, \hat{\alpha}_{\ell}(p) = e^{i\ell\varphi}\sigma_{\ell}(p) \text{ for } p \in \R^2 \big\}.
\end{align*}
Equivalently, $\F_{\ell}^{\mathrm{ti}}$ can be understood as the restriction of~$\F^{\mathrm{ti}}$ to pairs $(\gamma, \alpha) \in \D^{\mathrm{ti}}$ with the property that~$\gamma$ is radial and that~$\alpha$ is of the form given in~\eqref{eq:formofalpha}. In Section~\ref{sec:prep} we will show that~$\F_{\ell}^{\mathrm{ti}}$ has a minimizer. 

Next, we characterize the critical temperature~$T_c(\ell)$ corresponding to the BCS functionals~$\F_{\ell}^{\mathrm{ti}}$ on the sector of Cooper-pair wave functions of angular momentum $\ell \in 2\N_0$. For this purpose, let us introduce $\mathcal{H} = \{ f \in H^1(\R^2, \td p) \ \vert \ f \text{ radial }\}$. Then the critical temperature~$T_c(\ell)$ of~$\F_{\ell}^{\mathrm{ti}}$ is given by
\begin{align} \label{eq:TcMax}
T_c(\ell) \coloneqq \inf \left\{ T \geq 0 \left\vert \, \left(K_T + V_{\ell}\right)\big\vert_{\H} \geq 0 \right. \right\}.
\end{align}
The definition of $V_{\ell}$ in Eq.~\eqref{eq:Vell} and the fact that $K_T + V$ commutes with rotations, implies that
\begin{align*}
T_c = \max_{\ell \in 2\N_0} T_c(\ell)
\end{align*}
holds.

Let us now assume that $T_c = T_c(\ell_0)$ and that the lowest eigenvalue of $K_{T_c} + V$ is at most twice degenerate. In other words, we assume the lowest eigenvalue of $K_{T_c} + V$ to be exactly twice degenerate in the case $\ell_0 \neq 0$ and we assume it to be non-degenerate in the case $\ell_0 = 0$. An exemplary situation satisfying this assumption is illustrated in Figure 1. The meaning of this schematic pictures is the following. Since $T_c = T_c(\ell_0)$, the lowest eigenvalue of $K_T+V$ lies in the sector with angular momentum $\ell_0$. If we decrease the temperature this eigenvalue becomes negative and the second/third  eigenvalue (depending on the degeneracy) will approach zero at some temperature $\tilde{T} < T_c(\ell_0)$. For this eigenvalue, there are two possibilities: Either it also lies in the sector of angular momentum $\ell_0$, which means that $\tilde{T} \in (T_c(\ell_1),T_c)$ and this is the case illustrated in Figure 1, or the next eigenvalue lies in the next sector of angular momentum, which means that $\tilde{T} = T_c(\ell_1)$. 
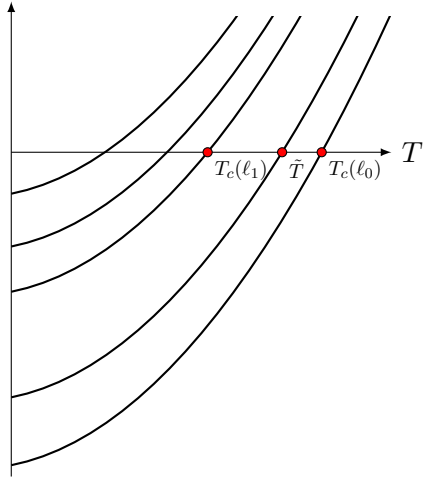
\begin{figure}
	\centering
	\begin{tikzpicture}[scale=1, >=latex]

\def\ymin{-4.3}
\def\ymax{2}
\def\ytcn{-4.2}
\def\ytcd{-3.3}
\def\ytce{-1.9}

\draw[->] (0,0) -- (5,0) node[right]{$T$};
\draw[->] (0,\ymin) -- (0,\ymax);

\clip (0,\ymin) rectangle (5,{0.9*\ymax});
\foreach \c in {-.6,-1.3,\ytce, \ytcd, \ytcn}{%
    \draw[thick] plot[domain=0:5](\x,{.2*(\x+.5)^2+\c});
}

\filldraw[black, fill=red] ({sqrt(-5*\ytcn)-.5},0) circle(.06) node[below right, scale=.7]{$T_c(\ell_0)$};
\filldraw[black, fill=red] ({sqrt(-5*\ytce)-.5},0) circle(.06) node[below right, scale=.7]{$T_c(\ell_1)$};
\filldraw[black, fill=red] ({sqrt(-5*\ytcd)-.5},0) circle(.06) node[below right, scale=.7]{$\tilde{T}$};

\end{tikzpicture}
	\caption{Schematic picture of the lowest eigenvalues of $K_{T} + V$ as a function of the temperature $T$. The lowest two lines represent eigenvalues in the sector of angular momentum $\ell_0$. The third line corresponds to the lowest eigenvalue in the angular momentum $\ell_1$ sector. The red dots highlight the temperatures at which one of the eigenvalues crosses the $T$-axis.}
\end{figure}
The following theorem shows that the translational symmetry in the BCS model persists if $T \in (\tilde{T}, T_c)$. In particular, if $\ell_0 = 0$, the periodic (and the translation-invariant) BCS functional has a, up to a phase, unique radial minimizer $(\gamma_0, \alpha_0)$ for $T \in (\tilde{T}, T_c)$. If $\ell_0 \neq 0$, the periodic (and the translation-invariant) BCS functional has two minimizers, namely $(\gamma_{\ell_0}, \alpha_{\ell_0})$ and $(\gamma_{\ell_0}, \alpha_{-\ell_0})$, with $\gamma_{\ell_0}$ radial and $\alpha_{\pm\ell_0}$ of the form $\hat{\alpha}_{\pm \ell_0}(p) = e^{\pm i \ell_0 \varphi} \sigma_{\ell_0}(p)$.

\begin{theorem} \label{thm:vollesFunktional}
Let $V \in L^2(\R^2)$ with $\hat{V} \in L^{r}(\R^2)$, where $r \in [1,2)$, be radial and such that $T_c > 0$. Suppose that $T_c = T_c(\ell_0)$ and that the lowest eigenvalue of $K_{T_c} + V$ is at most twice degenerate.
If
\begin{align*}
(\gamma_{\ell_0}, \sigma_{\ell_0}) \in \mathcal{D}_{\ell_0} 
\end{align*}
minimizes~$\F_{\ell_0}^{\mathrm{ti}}$, then there exists $\tilde{T} < T_c$ such that
\begin{align*}
(\gamma_{\ell_0}, \alpha_{\ell_0}) \text{ and } (\gamma_{\ell_0}, \alpha_{-\ell_0}) \in \D^{\mathrm{ti}},
\end{align*}
where $\hat{\alpha}_{\pm \ell_0}(p) = e^{\pm i \ell_0 \varphi} \sigma_{\ell_0}(p)$, minimize the BCS functional $\F$ for $T\in [\tilde{T}, T_c)$. For $T \in (\tilde{T}, T_c)$ these are the only minimizers of~$\F$ up to phases in front of~$\alpha_{\ell_0}$ and~$\alpha_{-\ell_0}$.
\end{theorem}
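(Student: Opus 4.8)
\emph{Strategy.} The claim will follow from a chain of comparisons
\[
\inf_{\D}\F\ \ge\ \inf_{\D^{\mathrm{ti}}}\F^{\mathrm{ti}}\ \ge\ \inf\F^{\mathrm{ti}}_{\ell_0}\ =\ \F(\gamma_{\ell_0},\alpha_{\ell_0}),
\]
in which every inequality is in fact an equality, together with an analysis of the equality cases that identifies the full minimizing set. The first step (no translational symmetry breaking) and the second step (the optimal angular momentum is $\ell_0$) are proved by the same mechanism applied once to the total momentum of a Cooper pair and once to its angular momentum. The starting point is the exact identity, valid for every $\Gamma\in\D$ with $\Gamma_0$ the normal state,
\[
\F(\Gamma)-\F(\Gamma_0)=\tfrac{T}{2}\,\mathcal{H}(\Gamma,\Gamma_0)+\int_{\Omega\times\R^d}V(x-y)\,\vert\alpha(x,y)\vert^2\,\td(x,y),
\]
where $\mathcal{H}(\Gamma,\Gamma_0)\ge0$ is the relative entropy of $\Gamma$ with respect to $\Gamma_0$; it is obtained by inserting $\gamma_0=(1+e^{(-\grad^2-\mu)/T})^{-1}$ and regrouping logarithms. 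To this I would add the standard lower bound on the relative entropy, which produces a nonnegative remainder $\mathcal{R}(\Gamma)\ge0$ with $\tfrac{T}{2}\mathcal{H}(\Gamma,\Gamma_0)\ge\langle\alpha,\mathbb{K}_T\alpha\rangle+\mathcal{R}(\Gamma)$, where $\mathbb{K}_T$ is the two-particle version of $K_T$ and $\mathcal{R}(\Gamma)=0$ forces $\gamma$ to equal the value $\gamma_\alpha$ slaved to $\alpha$ by the gap equation. Equivalently, minimizing first over $\gamma$ reduces the problem to a functional $\mathcal{G}(\alpha)=\F(\gamma_\alpha,\alpha)$ whose quadratic part is $\langle\alpha,(\mathbb{K}_T+V)\alpha\rangle$.

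\emph{Fibering out the dangerous modes.} Writing $\alpha(x,y)=\sum_{k\in2\pi\Z^d}e^{ik\cdot\frac{x+y}{2}}\hat\alpha_k(x-y)$ in relative and center-of-mass coordinates, both the interaction term $\sum_k\langle\hat\alpha_k,V\hat\alpha_k\rangle$ and the operator $\mathbb{K}_T$ are block-diagonal in the total momentum $k$: the $k$-block $K_T^{(k)}$ of $\mathbb{K}_T$ is a copy of $K_T$ with the chemical potential effectively lowered by a multiple of $\vert k\vert^2$, hence by at least a fixed positive amount when $k\neq0$, while $K_T^{(0)}=K_T$. The key structural input is then $K_T^{(k)}+V\ge0$ for all $k\in2\pi\Z^d\setminus\{0\}$ and all $T\in(\tilde T,T_c)$, from which any state carrying a nonzero total-momentum component is strictly energetically disfavoured, so that a minimizer of $\F$ is translation-invariant with $\gamma=\gamma_\alpha$. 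The identical scheme, now decomposing $K_T+V$ (which commutes with rotations) over the angular momentum $\ell$ via $\hat{V}=\sum_\ell\hat{V}_\ell e^{i\ell\varphi}$ and using that, by the hypotheses $T_c=T_c(\ell_0)$, the at-most-twofold degeneracy, and the definition of $\tilde T$ as the temperature at which the next eigenvalue of $K_T+V$ reaches zero, one has $K_T+V_\ell\ge0$ on the radial sector for every $\ell\neq\pm\ell_0$ on $(\tilde T,T_c)$, reduces the translation-invariant problem to $\alpha$ supported in the $\pm\ell_0$ channels. Combined with the hypothesis that $(\gamma_{\ell_0},\sigma_{\ell_0})$ minimizes $\F^{\mathrm{ti}}_{\ell_0}$ and with a finite-dimensional minimization over the $\{+\ell_0,-\ell_0\}$ degrees of freedom (the two channels carry the same kernel $\hat V_{\ell_0}=\hat V_{-\ell_0}$, and the reduced free part, by the equality analysis of $\mathcal{R}$, is extremal on a single channel), this closes the chain above.

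\emph{Identifying all minimizers.} On $(\tilde T,T_c)$ a minimizer of $\F$ is therefore translation-invariant with $\gamma$ slaved to $\alpha$ and $\alpha$ in the lowest eigenspace of $K_T+V$, which is one-dimensional, spanned by a radial $\sigma_T$, if $\ell_0=0$, and two-dimensional, spanned by $e^{\pm i\ell_0\varphi}\sigma_T$, if $\ell_0\neq0$. Since $\F$ is invariant under rotations and reflections and these act on $e^{i\ell_0\varphi}\sigma$ only by a phase and by $\ell_0\mapsto-\ell_0$, and since the finite-dimensional minimization in the two-fold degenerate case excludes genuine superpositions, the minimizing set is exactly $\{(\gamma_{\ell_0},\alpha_{\ell_0}),(\gamma_{\ell_0},\alpha_{-\ell_0})\}$ modulo phases. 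At the endpoint $T=\tilde T$ one keeps these minimizers, but the inequality excluding the further mode is no longer strict, which is why the uniqueness statement is confined to the open interval. The three-dimensional $s$-wave assertion is the special case $\ell_0=0$ with $\R^2$ replaced by $\R^3$ throughout.

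\emph{Main obstacle.} The delicate point is the inequality $K_T^{(k)}+V\ge0$ for $k\neq0$, i.e. the exclusion of Fulde--Ferrell--Larkin--Ovchinnikov-type states with nonzero center-of-mass momentum, \emph{uniformly} on $(\tilde T,T_c)$ rather than only near $T_c$. Because the interesting regime has a Fermi surface ($\mu>0$), $K_T^{(k)}$ is \emph{not} pointwise larger than $K_T$ near the origin in relative momentum, so a naive monotonicity argument fails; one must instead show that lowering the effective chemical potential by the fixed amount coming from $\vert k\vert\ge2\pi$ pushes the associated critical temperature below $\tilde T$, and upgrade the quadratic-form bound to a statement about the full nonlinear reduced functional $\mathcal{G}$. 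This is where the hypothesis $\hat{V}\in L^{r}(\R^2)$, $r\in[1,2)$, enters: it renders the Birman--Schwinger operators $\vert V\vert^{1/2}K_T^{-1}\vert V\vert^{1/2}$ and their $\mu$-shifted analogues compact, so that the relevant critical temperatures depend continuously and controllably on the parameters, the eigenvalue picture underlying $\tilde T$ is rigorous, and $\tilde T$ can finally be taken as the maximum of the angular-momentum crossing temperature and the threshold produced by the nonzero total-momentum sectors.
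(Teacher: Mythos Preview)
Your strategy linearizes around the wrong reference state. Comparing with the normal state $\Gamma_0$ leads to the operator $K_T+V$ (or a center-of-mass fibered version of it), which has a negative eigenvalue for every $T<T_c$; you then try to confine the negativity to the sector $k=0$, $\ell=\pm\ell_0$, and correctly flag the FFLO-type exclusion $K_T^{(k)}+V\ge0$ for $k\neq0$ as an unresolved obstacle. The paper sidesteps this entirely by comparing with the \emph{minimizer} $\Gamma_{\ell_0}$ rather than $\Gamma_0$: completing the square in the interaction and using the Euler--Lagrange relation $\hat\alpha_{\ell_0}=-\tilde\Delta_{\ell_0}/(2K_T^{\Delta_{\ell_0}})$ yields the exact identity $\F(\Gamma)-\F(\Gamma_{\ell_0})=\tfrac{T}{2}\mathcal{H}(\Gamma,\Gamma_{\ell_0})+\int_{\Omega\times\R^2}V(x-y)\,|\alpha-\alpha_{\ell_0}|^2$, and the relative-entropy inequality then produces, for each fixed $y$, the one-particle operator $K_T^{\Delta_{\ell_0}}+V$ acting on $\alpha(\cdot,y)\in L^2(\R^2)$, plus the nonnegative remainder $\Tro K_T^{\Delta_{\ell_0}}(\Gamma-\Gamma_{\ell_0})^2$. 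The crucial point is that $K_T^{\Delta_{\ell_0}}+V\ge0$ on \emph{all} of $L^2(\R^2)$: the gap function lifts the spectrum so that $\alpha_{\pm\ell_0}$ lie exactly in the kernel, and the hypothesis $\hat V\in L^r$ is used to show $\|\Delta_{\ell_0}\|_\infty\lesssim(T_c-T)^{1/2}$ and hence norm-resolvent convergence $K_T^{\Delta_{\ell_0}}+V\to K_{T_c}+V$, so that the at-most-twofold zero eigenvalue of $K_{T_c}+V$ persists as the bottom of the spectrum for $T\in[\tilde T,T_c)$ and is pinned at zero by the gap equation. This global positivity automatically covers every center-of-mass momentum and every angular channel at once; your FFLO question never arises.

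There is a second gap as well. Even granting your fiberwise quadratic bounds, the step from ``the quadratic part of $\mathcal{G}(\alpha)$ is nonnegative outside the $\pm\ell_0$ sector'' to ``any minimizer lies in that sector'' is not justified: the entropy is not block-diagonal in the center-of-mass momentum once $\alpha\neq0$, so the reduced functional does not decouple, and a Hessian bound at $\alpha=0$ does not by itself locate the minimum of the nonlinear problem. The paper's inequality is a genuine lower bound on $\F(\Gamma)-\F(\Gamma_{\ell_0})$, not a second-variation statement; equality in it forces $\gamma=\gamma_{\ell_0}$ and $\alpha\in\ker(K_T^{\Delta_{\ell_0}}+V)=\mathrm{span}\{\alpha_{\ell_0},\alpha_{-\ell_0}\}$, after which superpositions are excluded by observing that any such minimizer is translation-invariant, hence satisfies the Euler--Lagrange equation of $\F^{\mathrm{ti}}$, which forces $|\Delta|$ to be radial.
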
 

\begin{remark} \label{rem:ell1}
We want to emphasize that $\tilde{T}$ is determined by the lowest nonzero eigenvalue of $K_{T_c} + V$. More precisely, $\tilde{T}$ is given as the value of $T$ such that the second eigenvalue (counted without multiplicities) of $K_{T} + V$ is zero, which is illustrated in Figure 1.
In particular, if in addition to the assumption above, the second eigenvalue of $K_{T_c} + V$ lies in the sector of angular momentum $\ell_1 \neq \ell_0$, one can show that $\tilde{T} = T_c(\ell_1)$.
\end{remark}

\begin{remark}
The assumptions $V \in L^2(\mathbb{R}^2)$ and $\hat{V} \in L^{r}(\R^2)$ with $r \in [1,2)$ in Theorem~\ref{thm:vollesFunktional} are of technical nature and we expect the Theorem to hold as long as $V \in L^{1+\epsilon}(\mathbb{R}^2)$ for $\epsilon>0$. Note that this is the $L^p$ regularity for which $V$ is relatively form bounded with respect to the Laplacian in two space dimensions. The assumption on the Fourier transform of $V$ is only needed in the proof of Proposition~\ref{prop:nonnegativity}. In \cite[Proposition 5.6]{FraLem16} a similar result is proved in the case $d=3$ under the assumption $V \in L^{3/2}(\mathbb{R}^3)$ which guarantees form boundedness relative to the Laplacian in this case. Although we expect the strategy of that proof to carry over to $d=2$, our argument is much simpler than the one given in this reference and so we prefer to keep the additional assumption on $\hat{V}$.
\end{remark}
\begin{remark}
The Fourier transform preserves angular momentum sectors, and hence the inverse Fourier transforms of the minimizing Cooper-pair wave functions $\hat{\alpha}_{\pm \ell_0}(p) = e^{\pm i \ell_0 \varphi_p} \sigma_{\ell_0}(p)$ are of the form $e^{\pm i \ell_0 \varphi_x} f_{\ell_0}(x)$ with $f_{\ell_0}$ radial. That is, the Cooper-pairs have definite angular momentum also in position space.
\end{remark}
 
\begin{remark}
An important step in the proof of Theorem \ref{thm:vollesFunktional} is to compare the minimizers of the BCS functional~$\F_{\ell_0}^{\mathrm{ti}}$ on the sector of Cooper-pair wave functions with angular momentum $\ell_0$ with the minimizers of the periodic BCS functional~$\F$. The crucial tool for this comparison will be the relative entropy inequality,~\cite[Lemma 5]{FHSS12}.
\end{remark}

\begin{remark}
It is shown in~\cite{FraLem16}, amongst other things, that for every~$\ell \in 2\N_0$ one can find a radial potential such that the ground state of  $K_{T_c}+V$ has angular momentum~$\ell$. This in particular implies $T_c = T_c(\ell)$ for such a potential.
In the case of weak coupling, that is for $K_T + \lambda V$, where $\lambda \in \R$ is small enough, the methods of~\cite{FHNS07,HaiSei08a} can be applied to determine the angular momentum~$\ell_0$ of the ground state of~$K_{T_c}+V$. An application of these methods reduces the problem of finding the eigenvalues of $K_T + \lambda V$, for $\lambda$ small enough, to finding the eigenvalues of a simple matrix, that only depends on the behavior of $\hat{V}$ on the Fermi sphere. This is easily solvable numerically. In particular, one sees, that  the eigenvalues are in one-to-one correspondence to the eigenvalues of the matrix $\smash{( \langle \psi_n, \hat{V} \psi_m \rangle )_{n,m \geq 0}}$, where $\psi_n(p) = e^{i n \varphi}$. Moreover, if the lowest eigenvalue of this matrix is at most twice degenerate one is in the situation described in Remark \ref{rem:ell1}, i.e. $\tilde{T} = T_c(\ell_1)$.
\end{remark}

\begin{remark}
In the non-interacting case, that is, for~$V = 0$, the minimizer of the BCS function~$\F$ is given by the normal state
\begin{align*}
\hat{\Gamma}_0 = \left( \begin{array}{cc} \hat{\gamma}_0 & 0 \\ 0 & 1 - \overline{\hat{\gamma}_0} \end{array} \right),
\end{align*}
where $\hat{\gamma}_0 = (1 + \exp((-\grad^2 - \mu)/T))^{-1}$. 
Let us assume that we are in the situation of Remark \ref{rem:ell1}.
Having in mind that the linear operator~$K_T + V$, which characterizes~$T_c$, is related to the second variation of~$\F$ at the normal state~$\Gamma_{0}$ in the direction of~$\alpha$ by
\begin{align*}
\left. \frac{\td^2}{\td t^2} \F(\gamma_0, t \alpha)\right\vert_{t = 0} = 2\langle \alpha, (K_T + V)\alpha \rangle,
\end{align*}
one can understand Theorem~\ref{thm:vollesFunktional} as follows. We find $T < T_c$ such that~$K_{T} + V$ has exactly one negative eigenvalue~$\lambda_0$. Hence the second variation is smallest (and, in particular, negative) if $\alpha$ is an element of the eigenspace of~$\lambda_0$ and one could therefore hope to find a minimizer of~$\F$ which lies approximately in this eigenspace. In fact, Theorem \ref{thm:vollesFunktional} states that the minimizers of~$\F$ for temperatures~$T$ in a certain interval below~$T_c$ lie in exactly one specific sector of angular momentum~$\pm\ell_0$. For~$T = T_c(\ell_1)$ the next eigenvalue~$\lambda_1$ and its eigenspace become important, since now also elements of the eigenspace of~$\lambda_1$ are candidates to lower the energy. 
\end{remark}

In the special case~$\ell_0 = 0$, Theorem~\ref{thm:vollesFunktional} also holds in three dimensions. 

\begin{theorem} \label{thm:3dim}
Let $V \in L^2(\R^3)$ with $\hat{V} \in L^{r}(\R^3)$ for some $r \in [1,12/7)$ be radial and such that $T_c > 0$. Assume that zero is a non-degenerate eigenvalue of $K_{T_c} + V$, that is, the corresponding eigenfunction is radial.
Then, there exists $\tilde{T} < T_c$ such that the minimizer of the BCS functional $\F$ for $T \in [\tilde{T}, T_c)$ is given by a pair $(\gamma_0, \alpha_0)$, where $\gamma_0$ and $\alpha_0$ are radial functions. Moreover, $(\gamma_0, \alpha_0)$ is, up to phases, the only minimizer of $\F$ for $T \in (\tilde{T},T_c)$.
\end{theorem}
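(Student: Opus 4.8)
\textbf{Proof proposal for Theorem~\ref{thm:3dim}.}
The plan is to run the same machinery as in the proof of Theorem~\ref{thm:vollesFunktional}, with $\ell_0 = 0$, and to check that every step survives the passage from $d=2$ to $d=3$. The only place where the dimension enters in an essential way is the regularity input on $V$ and $\hat V$: in two dimensions $V\in L^{1+\eps}$ gives form boundedness relative to the Laplacian, while in three dimensions one needs $V\in L^{3/2}$, which is guaranteed by $V\in L^2(\R^3)$ together with $\hat V\in L^r(\R^3)$ for $r<12/7$ (interpolate, using Hausdorff–Young to convert the hypothesis on $\hat V$ into an $L^p$ statement on $V$ for a suitable $p<3/2$, and combine with $V\in L^2$). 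With this in hand, $K_T + V$ is a well-defined self-adjoint operator bounded below, $T_c$ is characterized as in the translation-invariant theory, and the second variation of $\F$ at the normal state in the $\alpha$-direction is again $2\langle\alpha,(K_T+V)\alpha\rangle$.

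First I would record that, since by assumption zero is a \emph{non-degenerate} eigenvalue of $K_{T_c}+V$ with radial eigenfunction, the whole analysis can be carried out on the radial sector: one defines $\F_0^{\mathrm{ti}}$ exactly as $\F_\ell^{\mathrm{ti}}$ but now in three dimensions with $\hat\alpha_0$ radial (no angular phase, since $\ell_0 = 0$), shows it has a minimizer $(\gamma_0,\sigma_0)$ by the compactness/lower-semicontinuity argument of Section~\ref{sec:prep}, and observes that $\gamma_0$ is automatically radial by the Euler--Lagrange equation. Then I would choose $\tilde T < T_c$ to be the temperature at which the second eigenvalue (counted without multiplicity) of $K_T+V$ vanishes; for $T\in(\tilde T,T_c)$ the operator $K_T+V$ has exactly one negative eigenvalue, and its eigenfunction is radial.

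The core of the argument is the comparison between an arbitrary periodic minimizer of $\F$ and the radial translation-invariant candidate $(\gamma_0,\alpha_0)$, carried out via the relative entropy inequality \cite[Lemma 5]{FHSS12}. The idea is: given a periodic minimizer $\Gamma$, use the relative entropy bound to estimate $\F(\Gamma) - \F(\Gamma_0)$ from below by a quadratic expression in $\alpha$ controlled by $K_{T_c}+V$ plus higher-order corrections, conclude that the "$\alpha$-part" of $\Gamma$ must be concentrated in the one-dimensional negative eigenspace of $K_T+V$ (here non-degeneracy is what forces a \emph{single} sector, hence a translation-invariant, radial $\alpha$), and then bootstrap: once $\alpha$ is known to be translation-invariant and radial, the variational equations force $\gamma$ to be translation-invariant and radial too, so $\Gamma$ coincides with a minimizer of $\F_0^{\mathrm{ti}}$, which in turn (by the reduction above and uniqueness of the radial minimizer) must be $(\gamma_0,\alpha_0)$ up to a phase. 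Uniqueness on $(\tilde T,T_c)$ and the inclusion of the endpoint $T=\tilde T$ among the minimizers are handled exactly as in Theorem~\ref{thm:vollesFunktional}.

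The main obstacle I anticipate is purely the regularity bookkeeping in $d=3$: making sure that $\hat V\in L^r(\R^3)$ with $r<12/7$ is exactly what is needed so that Proposition~\ref{prop:nonnegativity} (the step where the hypothesis on $\hat V$ was used in two dimensions) goes through with the three-dimensional Sobolev and Hausdorff--Young exponents, and that all the a priori bounds on minimizers ($H^1$ bounds on $\alpha$, trace-class type bounds on $\gamma$) that enter the relative-entropy comparison remain valid. The structural part of the proof — reduction to the radial sector because $\ell_0=0$ and the eigenvalue is non-degenerate, then the relative-entropy comparison — is dimension-insensitive and should transfer verbatim; the role of the hypothesis $\ell_0=0$ is precisely to avoid the two-dimensional phenomenon of a twofold-degenerate ground eigenvalue in the sector $\pm\ell_0$, which is why in $d=3$ one gets a genuinely unique minimizer rather than a pair.
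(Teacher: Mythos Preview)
Your plan matches the paper's approach: the paper's proof of Theorem~\ref{thm:3dim} is a single sentence noting that all arguments for Theorem~\ref{thm:vollesFunktional} with $\ell_0=0$ go through in $d=3$, with the sole exception that Lemma~\ref{lem:AB2} requires $\hat V\in L^r(\R^3)$ for some $r\in[1,12/7)$.

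There is, however, one incorrect claim in your proposal. You say the hypothesis $\hat V\in L^r$ with $r<12/7$ serves to secure $V\in L^{3/2}(\R^3)$ and hence form-boundedness, via Hausdorff--Young plus interpolation. This interpolation does not work: Hausdorff--Young on $\hat V\in L^r$ with $r\in[1,2]$ yields $V\in L^{r'}$ with $r'=r/(r-1)\geq 2$, and interpolating with $V\in L^2$ only reaches exponents in $[2,r']$, never $3/2$. In fact $V\in L^2(\R^3)$ alone already gives infinitesimal form-boundedness relative to the Laplacian (use the Sobolev embedding $H^1(\R^3)\subset L^4(\R^3)$ and Gagliardo--Nirenberg), so no additional input is needed there. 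The actual role of the hypothesis $\hat V\in L^r$, $r<12/7$, is in Lemma~\ref{lem:AB2}: one estimates $\|\hat\alpha_{\ell_0}\|_r\leq\|(1+p^2)^{-1/4}\|_s\,\|(1+p^2)^{1/4}\hat\alpha_{\ell_0}\|_4$, and in $\R^3$ the first factor is finite only for $s>6$, which forces $r>12/5$; then Young's inequality $\|\Delta_{\ell_0}\|_\infty\lesssim\|\hat V\|_t\|\hat\alpha_{\ell_0}\|_r$ requires the conjugate exponent $t<12/7$. You do correctly locate the dimensional issue near Proposition~\ref{prop:nonnegativity} in your last paragraph; the paper is simply more precise in pointing to Lemma~\ref{lem:AB2} within that chain.
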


\begin{remark}
	Note that $\hat{V} \leq 0$ implies that the ground state of $K_{T_c} + V$ is radial in all dimensions. Hence, the assumption that $K_{T_c} + V$ has a non-degenerate lowest eigenvalue is always satisfied for interaction potentials $V$ with this property.
\end{remark}

\begin{remark}
As in the case of Theorem~\ref{thm:vollesFunktional}, we expect Theorem~\ref{thm:3dim} to hold under the only assumption that $V$ is relatively form bounded with respect to the Laplacian, that is, if $V \in L^{3/2}(\mathbb{R}^3)$.
\end{remark}
 
We recall the gap function $\Delta(p) = 2(2\pi)^{-d/2} \hat{V}\ast\hat{\alpha}(p)$ with $d=2,3$. The Cooper-pair wave function of any minimizer of the translation-invariant BCS functional $\mathcal{F}^{\mathrm{ti}}$ satisfies the Euler-Lagrange equation 
\begin{equation}
\left( K_T^{\Delta} + V \right) \alpha = 0.	
\label{eq:proposition1}
\end{equation}
Here $K_T^{\Delta}$ is the operator defined by multiplication in Fourier space with the function
\begin{equation}
K_T^{\Delta}(p) = \frac{E(p)}{\tanh\left(E(p)/(2T)\right)}, \quad \text{ where } \quad E(p) = \sqrt{(p^2-\mu)^2 + \vert \Delta(p) \vert^2}. \nonumber
\end{equation}
The key ingredient to the proof of Theorem~\ref{thm:vollesFunktional} and Theorem~\ref{thm:3dim} is that in both situations $K_T^{\Delta} + V \geq 0$ holds. The following Proposition tells us that this already implies that $\vert \hat{\alpha}(p) \vert$ is a radial function. Hence, our strategy of proof can only work if this is the case. In particular, it tells us that we cannot extend our results to situations where the absolute value of the Fourier transform of the ground state of $K_{T_c} + V$ is not radial.
\begin{proposition}
\label{prop:mainproposition}
Let $V$ be a radial function with $V \in L^{2}(\mathbb{R}^2)$ if $d=2$ and $V \in L^{3/2}(\mathbb{R}^3)$ if $d=3$. Assume that $(\gamma,\alpha)$ is a minimizer of the translation-invariant BCS functional $\mathcal{F}^{\mathrm{ti}}$ such that $\vert \hat{\alpha}(p) \vert$ is not a radial function. Then there exists a rotation $R \in SO(d)$ such that
\begin{equation}
\left\langle U(R) \alpha , \left( K_T^{\Delta} + V \right) U(R) \alpha \right\rangle < 0,
\end{equation}
where $\left( U(R) f \right)(p) = f(R^{-1}p)$.
\end{proposition}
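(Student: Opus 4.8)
\medskip
\noindent\textbf{Strategy and reduction.}
The plan is to use the Euler--Lagrange equation~\eqref{eq:proposition1} to eliminate the potential $V$ from the quadratic form, which turns the claim into an inequality for the Fourier multiplier $K_T^\Delta$ alone, and then to produce the required rotation by averaging over $SO(d)$ together with a Chebyshev-type correlation inequality on each sphere $\{\,|p|=s\,\}$. Concretely: since $\alpha$ is the Cooper-pair wave function of a minimizer of $\mathcal{F}^{\mathrm{ti}}$, it solves $(K_T^\Delta+V)\alpha=0$, so pairing with $\alpha$ gives $\langle\alpha,V\alpha\rangle=-\langle\alpha,K_T^\Delta\alpha\rangle$, and the right-hand side is finite because $\alpha\in H^1(\R^d)$ and $K_T^\Delta$ grows at most quadratically. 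As $V$ is radial, $U(R)$ commutes with multiplication by $V$, hence $\langle U(R)\alpha,V\,U(R)\alpha\rangle=\langle\alpha,V\alpha\rangle$ for every $R\in SO(d)$. Writing
\[
F(R):=\langle U(R)\alpha,\,K_T^\Delta\,U(R)\alpha\rangle=\int_{\R^d}K_T^\Delta(Rp)\,|\hat\alpha(p)|^2\,\td p
\]
(using that $U(R)$ commutes with the Fourier transform and that $K_T^\Delta$ acts by multiplication by $K_T^\Delta(p)$), one obtains the identity $\langle U(R)\alpha,(K_T^\Delta+V)U(R)\alpha\rangle=F(R)-F(\id)$. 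It is therefore enough to exhibit $R\in SO(d)$ with $F(R)<F(\id)$.

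\smallskip
\noindent\textbf{Monotonicity on spheres.}
Next I would exploit that the Euler--Lagrange equation together with $\Delta=2\widehat{V\alpha}$ gives the pointwise relation $\hat\alpha(p)=-\Delta(p)/(2K_T^\Delta(p))$, so that
\[
|\hat\alpha(p)|^2=\frac14\Big(1-\frac{(p^2-\mu)^2}{E(p)^2}\Big)\tanh^2\!\Big(\frac{E(p)}{2T}\Big),\qquad E(p)^2=(p^2-\mu)^2+|\Delta(p)|^2 .
\]
Since $t\mapsto t/\tanh(t/(2T))$ is a strictly increasing bijection of $[0,\infty)$, on a fixed sphere $\{|p|=s\}$ the quantity $K_T^\Delta(p)$ is a strictly increasing function of $|\Delta(p)|^2$, while $|\hat\alpha(p)|^2$ is a non-decreasing function of $|\Delta(p)|^2$ (a product of two non-negative non-decreasing factors of $|\Delta(p)|^2$). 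Composing, $|\hat\alpha(p)|^2=\psi_s\big(K_T^\Delta(p)\big)$ on $\{|p|=s\}$ for some non-decreasing $\psi_s$. Moreover, since $(p^2-\mu)^2$ is radial and $t\mapsto t/\tanh(t/(2T))$ is injective, $K_T^\Delta$ is radial $\iff$ $|\Delta|$ is radial $\iff$ $|\hat\alpha|$ is radial; by assumption the last fails, so $K_T^\Delta$ is non-constant on $\{|p|=s\}$ for all $s$ in some set $A\subset(0,\infty)$ of positive measure.

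\smallskip
\noindent\textbf{Averaging and conclusion.}
Then I would average $F$ over $SO(d)$. Denoting by $\overline{g}(s)$ the mean of a function $g$ over $\{|p|=s\}$ with respect to normalized surface measure, Tonelli's theorem gives $\int_{SO(d)}F(R)\,\td R=\int_{\R^d}\overline{K_T^\Delta}(|p|)\,|\hat\alpha(p)|^2\,\td p$, because the pushforward of normalized Haar measure under $R\mapsto Rp$ is the normalized surface measure on $\{|q|=|p|\}$. On each sphere, $K_T^\Delta$ and $|\hat\alpha|^2=\psi_s(K_T^\Delta)$ are comonotone, so Chebyshev's integral inequality shows that replacing $K_T^\Delta$ by its average $\overline{K_T^\Delta}(s)$ does not increase $\int_{\{|p|=s\}}K_T^\Delta|\hat\alpha|^2$, with equality only if $K_T^\Delta$ is a.e.\ constant on that sphere or $\psi_s$ is constant on the range of $K_T^\Delta$ there. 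The second option is excluded for $s\in A$: there $|\Delta|^2$ attains at least two values, and $|\hat\alpha|^2$ — which vanishes exactly where $\Delta$ does and is strictly increasing in $|\Delta|^2$ where $\Delta\neq0$ — takes different values at them. Integrating over the radial variable then yields the \emph{strict} inequality $F(\id)>\int_{SO(d)}F(R)\,\td R$, so $F(R)<F(\id)$ for some $R\in SO(d)$, and the reduction of the first step finishes the proof.

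\smallskip
\noindent\textbf{Main obstacle.}
The first step is routine bookkeeping; the substance lies in the monotonicity and averaging steps. The point I expect to have to get exactly right is the interplay between the radial part of $K_T^\Delta$ and the weight $|\hat\alpha|^2$: one must see that $|\hat\alpha|^2$ is a monotone function of $K_T^\Delta$ along each sphere (so that tilting the sphere by a rotation can only lower the weighted average of $K_T^\Delta$ away from $R=\id$), and — the more delicate part — that the hypothesis ``$|\hat\alpha|$ is not radial'' is precisely what upgrades Chebyshev's inequality to a strict one. The remaining technicalities are minor: boundedness of $\Delta$, which follows from $V\alpha\in L^1(\R^d)$ by H\"older (using $V\in L^2(\R^2)$, respectively $V\in L^{3/2}(\R^3)$ together with $\alpha\in H^1(\R^3)\subset L^6(\R^3)$) and gives the quadratic bound on $K_T^\Delta$ needed for Tonelli and for finiteness of $F(R)$; and finiteness of $F(\id)=\langle\alpha,K_T^\Delta\alpha\rangle$, which is just finiteness of the energy of the minimizer.
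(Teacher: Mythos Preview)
Your argument is correct and follows the same core strategy as the paper: reduce to showing $\langle U(R)\alpha, K_T^\Delta U(R)\alpha\rangle < \langle\alpha, K_T^\Delta\alpha\rangle$ using the Euler--Lagrange relation $\hat\alpha=-\Delta/(2K_T^\Delta)$ and radiality of $V$, decompose into spheres, and exploit that on each sphere both $K_T^\Delta$ and $|\hat\alpha|^2$ are strictly increasing functions of $|\Delta|^2$. The only real difference is in how the correlation inequality is cashed out. The paper fixes a rotation, expands the pointwise-nonnegative quantity
\[
M(R)=\int_{\Omega_r}\bigl[g(|\Delta(Rp)|)-g(|\Delta(p)|)\bigr]\bigl[f(|\Delta(Rp)|)-f(|\Delta(p)|)\bigr]\,d\omega(p)\ge 0,
\]
so that for each $r$ at least one of $R,R^{-1}$ gives the desired inequality on $\Omega_r$, and then uses continuity of $\Delta$ to turn a strict inequality on a single sphere into a strict inequality for the radial integral. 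Your Haar-averaging argument replaces this by $\int_{SO(d)}F(R)\,dR<F(\id)$, obtained from the standard Chebyshev correlation inequality sphere-by-sphere, and uses only Fubini to get a positive-measure set of radii with strict inequality; this avoids the appeal to continuity of $\Delta$ and is a mild simplification. Both approaches rest on the same monotonicity observation, so the substance is the same.
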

\section{Preparations} \label{sec:prep}
The proof of Theorem \ref{thm:3dim} works similarly to the proof of Theorem \ref{thm:vollesFunktional}. In order to prove Theorem \ref{thm:vollesFunktional} we will show that there exists $\ell_0 \in 2\N_0$, such that the minimizers of~$\F_{\ell_0}^{\mathrm{ti}}$ also minimize~$\F$. The following lemma lays the basis for this approach.  

In~\cite{HHSS08} it was shown that~$\F^{\mathrm{ti}}$ is bounded from below and attains its infimum on~$\D^{\mathrm{ti}}$ in three dimensions. The same results hold in two dimensions by analogous arguments, which provides a solution of the BCS gap equation in this case. 

\begin{lemma} \label{lem:extmin}
The BCS functional~$\F_{\ell}^{\mathrm{ti}}$ is bounded from below and attains its minimum. 
\end{lemma}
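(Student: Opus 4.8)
The plan is to reduce the statement to the corresponding facts about the full translation-invariant functional $\F^{\mathrm{ti}}$, which are already available from \cite{HHSS08} (and their two-dimensional analogues). First I would establish the lower bound. Since $\F_\ell^{\mathrm{ti}}$ is the restriction of $\F^{\mathrm{ti}}$ to the subset of admissible translation-invariant states whose Cooper-pair wave function has the special angular form $\hat\alpha_\ell(p)=e^{i\ell\varphi}\sigma_\ell(p)$ and whose $\gamma$ is radial, the bound
\[
\F_\ell^{\mathrm{ti}}(\Gamma_\ell)=\F^{\mathrm{ti}}(\Gamma_\ell)\ge \inf_{\D^{\mathrm{ti}}}\F^{\mathrm{ti}}>-\infty
\]
is immediate. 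Alternatively, one can give a self-contained argument: the entropy term is bounded, and using $|\hat\alpha(p)|^2\le\hat\gamma(p)(1-\hat\gamma(p))\le\hat\gamma(p)$ together with $V\in L^2(\R^2)$ (so that $V$ is relatively form bounded with respect to $-\grad^2$ with relative bound zero in $d=2$), the interaction term is controlled by a small multiple of the kinetic energy plus a constant, exactly as in \cite{HHSS08}; this gives coercivity in the kinetic energy as well, which I will want for the compactness step.

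Next I would prove that the infimum is attained. Take a minimizing sequence $(\gamma_\ell^{(n)},\sigma_\ell^{(n)})\in\D_\ell$. Coercivity from the previous step bounds $\Tro[(-\grad^2+1)\gamma_\ell^{(n)}]$, hence $\hat\gamma_\ell^{(n)}$ is bounded in $L^1(\R^2,(1+p^2)\td p)$ and $\alpha_\ell^{(n)}$ is bounded in $H^1(\R^2)$. Passing to a subsequence, $\hat\gamma_\ell^{(n)}\rightharpoonup\hat\gamma_\ell$ and $\hat\alpha_\ell^{(n)}\rightharpoonup\hat\alpha_\ell$ weakly (and $\sigma_\ell^{(n)}\rightharpoonup\sigma_\ell$), with the weak limits again radial and with $\hat\alpha_\ell$ again of the form $e^{i\ell\varphi}\sigma_\ell$, since the angular-momentum-$\ell$ sector and the subspace of radial functions are both weakly closed. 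The pointwise constraint $0\le\hat\Gamma(p)\le 1$ is preserved under weak limits (it is a convex, weakly closed condition), so the limit lies in $\D_\ell$. Lower semicontinuity of $\F_\ell^{\mathrm{ti}}$ then follows exactly as for $\F^{\mathrm{ti}}$ in \cite{HHSS08}: the kinetic term $\int(p^2-\mu)\hat\gamma_\ell\,\td p$ is handled by splitting into the positive part $p^2\hat\gamma_\ell$ (weakly lower semicontinuous, after controlling the region $\{|p|\le\sqrt\mu\}$ using the $H^1$-bound on $\alpha$ and the constraint), the entropy $-TS$ is weakly lower semicontinuous by concavity of $S$ and a Fatou-type argument, and the interaction term $\int\int\overline{\sigma_\ell(p)}\sigma_\ell(q)\hat V_\ell(p,q)\,\td p\,\td q$ converges because $\hat V_\ell$ inherits from $\hat V\in L^2$ enough regularity/decay that the associated Hilbert–Schmidt-type operator is compact on the relevant weighted space (this is where I would invoke the $L^2$-assumption on $V$, or equivalently its relative form boundedness, to pass the weak convergence of $\sigma_\ell^{(n)}$ through the bilinear form).

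The main obstacle, as always in these minimization arguments, is the interaction term: the bilinear form $(\sigma,\tau)\mapsto\int\int\overline{\sigma(p)}\,\tau(q)\,\hat V_\ell(p,q)\,\td p\,\td q$ is not weakly continuous in general, and one must show that the negative part of the energy cannot escape to infinity along the minimizing sequence. I would address this precisely as in \cite{HHSS08}: use that $V$ (hence each $V_\ell$) is infinitesimally form bounded relative to $-\grad^2$ in two dimensions, write the interaction as $\langle\sigma_\ell^{(n)},V_\ell\sigma_\ell^{(n)}\rangle$, and split $V_\ell$ into a bounded piece (for which weak convergence in $L^2$ suffices after noting the relevant operator is compact) and a small piece absorbed into the kinetic energy. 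Since the restriction to a fixed angular sector only decreases the set over which one minimizes and does not interfere with any of these estimates, the argument of \cite{HHSS08} applies verbatim, and I would simply cite it rather than reproduce it. The existence of a minimizer of $\F_\ell^{\mathrm{ti}}$ follows.
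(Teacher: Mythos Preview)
Your proposal is correct and follows essentially the same approach as the paper: boundedness from below is inherited from $\F^{\mathrm{ti}}$, and existence of a minimizer is obtained by running the argument of \cite[Lemma~1]{HHSS08} on a minimizing sequence in $\D_\ell$, noting that the angular-momentum-$\ell$ sector is closed under the relevant limits. The paper's version is terser---it simply cites \cite{HHSS08} for strong $L^p\times L^2$ convergence of the minimizing sequence rather than spelling out a weak-convergence/lower-semicontinuity scheme---but the substance is the same.
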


\begin{proof}
Boundedness from below of~$\F_{\ell}^{\mathrm{ti}}$ follows from the fact that~$\F^{\mathrm{ti}}$ is bounded from below.
As in the proof of~\cite[Lemma 1]{HHSS08} we find a minimizing sequence $\smash{(\gamma_{\ell}^{(n)\vphantom{b}}, \sigma_{\ell}^{(n)\vphantom{b}})}$ in~$\D_{\ell}$ that converges strongly in $L^p(\R^2) \times L^2(\R^2)$  to $(\gamma,\sigma)$ for some $p \in (1, \infty)$, as~$n$ tends to infinity. It is an easy consequence that $(\gamma,\sigma) \in \mathcal{D}_{\ell}$.
\end{proof}


The Euler-Lagrange equation of~$\F_{\ell}^{\mathrm{ti}}$ takes the same form as the Euler-Lagrange equation of~$\F^{\mathrm{ti}}$, which will play an important role in the proof. The derivation of the Euler-Lagrange equation of~$\F^{\mathrm{ti}}$ given in~\cite[Proposition 3.1]{HaiSei16} translates to the case of~$\F_{\ell}^{\mathrm{ti}}$. Therefore, we will not rewrite the proof here, but only give the Euler-Lagrange equation of~$\F_{\ell}^{\mathrm{ti}}$ in its various forms. 

Let us define the gap function $\Delta_{\ell}$ related to the Cooper-pair wave function $\sigma_{\ell}$ by
\begin{equation}
\Delta_{\ell}(p) = \frac{1}{\pi} \int_{\mathbb{R}^2} V_{\ell}(p,q) \sigma_{\ell}(p) \td q.
\label{eq:gapfunction}
\end{equation}
Since $V_{\ell}(p,q)$ is radial in both arguments $\Delta_{\ell}(p)$ is a radial function, too. Also define 
\begin{equation} \label{def:HDelta}
	H_{\Delta_{\ell}}(p) = \left( \begin{array}{cc} k(p) & \Delta_{\ell}(p) \\ \overline{\Delta_{\ell}(p)} &  -k(p)  \end{array} \right)
\end{equation}  
with $k(p) = p^2 - \mu$. For $T > 0$, the Euler-Lagrange equation of the functional~$\F_{\ell}^{\mathrm{ti}}$, is given by
\begin{equation}
	\Gamma_{\ell}(p) = \begin{pmatrix} \gamma_{\ell}(p) & \sigma_{\ell}(p) \\ \overline{\sigma_{\ell}(p)} & 1 - \gamma_{\ell}(p) \end{pmatrix} = \frac{1}{1 + e^{H_{\Delta_{\ell}}(p)/T}}.
	\label{eq:EL}
\end{equation}
The right-hand side of Eq.~\eqref{eq:EL} depends only on $\sigma_{\ell}$ through $\Delta_{\ell}$ but not on $\gamma_{\ell}$. That is, $\gamma_{\ell}$ is determined by $\sigma_{\ell}$. 

Let us define $E_{\ell}(p) = \sqrt{(p^2 - \mu)^2 + \vert \Delta_{\ell}(p) \vert^2}$ and the function $\smash{K^{\Delta_{\ell}}_T}$, which for $T > 0$ is given by
\begin{equation*}
K_T^{\Delta_{\ell}}(p) = \frac{E_{\ell}(p)}{\tanh\left(E_{\ell}(p)/(2T) \right)}.
\end{equation*}
Then $\smash{K^{\Delta_{\ell}}_T} = \smash{K^{\Delta_{\ell}}_T}(- i\grad)$ defines an operator on $L^2(\R^2)$ acting by multiplication with $\smash{K^{\Delta_{\ell}}_T}(p)$ in Fourier space. 
Calculations given explicitly in~\cite{HaiSei16} show that~\eqref{eq:EL} is equivalent to 
\begin{align}
\gamma_{\ell}(p) & = \frac{1}{2} - \frac{p^2 - \mu}{2K_{T}^{\Delta_{\ell}}(p)}, \label{eq:ELgamma1} \\
\sigma_{\ell}(p) & = - \frac{\Delta_{\ell}( p)}{2K_{T}^{\Delta_{\ell}}(p)}.  \label{eq:ELalpha}
\end{align}
Using Eq.~\eqref{eq:gapfunction}, we see that Eq.~\eqref{eq:ELalpha} can be written as
\begin{align} \label{eq:gap}
\left( K_T^{\Delta_{\ell}} + V_{\ell} \right) \sigma_{\ell} = 0.
\end{align}
We will also make use of this equation in the form
\begin{align} \label{eq:gapalpha}
\left( K_T^{\Delta_{\ell}} + V \right) \alpha_{\ell} = 0,
\end{align}
where~$\alpha_{\ell}$ is of the form~\eqref{eq:formofalpha}.

\section{Proof of Theorem \ref{thm:vollesFunktional} and Theorem \ref{thm:3dim}}  \label{section:proofofthmformofmin}

We begin with the proof of Theorem~\ref{thm:vollesFunktional}. Let $(\gamma_{\ell_0}, \sigma_{\ell_0}) \in \D_{\ell_0}$ be a minimizer of~$\F_{\ell_0}^{\mathrm{ti}}$ and assume $T_c = T_c(\ell_0)$.
Let~$\Gamma_{\ell_0}$ be the BCS state given by the pair $(\gamma_{\ell_0}, \alpha_{\ell_0})$ with $\hat{\alpha}_{\ell_0}(p) = e^{i \ell_0 \varphi}\sigma_{\ell_0}(p)$. Our aim is to show that the inequality $\F(\Gamma) - \F(\Gamma_{\ell_0}) \geq 0$ holds for all $\Gamma \in \D$. We will use a generalization of the trace per unite volume, which for a periodic operator $A$ on $L^2(\R^2, \C^2)$ is defined by 
\begin{align*}
\Trs[A] = \Tro[P_0AP_0 + Q_0AQ_0]
\end{align*}
with
\begin{align*}
 P_0 = \left( \begin{array}{cc} 1 & 0 \\ 0 & 0 \end{array} \right) \text{ and }  Q_0 = \left( \begin{array}{cc} 0 & 0 \\ 0 & 1 \end{array} \right).
\end{align*}
Note that if~$A$ is locally trace class, then $\Trs[A] = \Tro[A]$.

We begin by calculating the difference $\F(\Gamma) - \F(\Gamma_{\ell})$, where~$\Gamma_{\ell}$ corresponds to a minimizer of~$\F_{\ell}^{\mathrm{ti}}$ as described above. The state $\Gamma$ is defined by the pair $(\gamma, \alpha)$. We find
\begin{align} \label{eq:difference}
\F&(\Gamma) - \F(\Gamma_{\ell})  \\ 
& = \Tro\left[\left( -\grad^2 - \mu\right) \left(\gamma - \gamma_{\ell} \right)\right]  \notag \\
 &  \ \ \ \ + \int_{\Omega \times \R^2} V(x-y) \left( \vert \alpha(x,y) \vert^2 - \vert \alpha_{\ell}(x,y) \vert^2 \right) \td(x,y) - T\left( S(\Gamma) - S(\Gamma_{\ell}) \right).  \notag
\end{align}
First, we complete the square in the difference of the interaction terms, which yields
\begin{align*} 
& \int_{\Omega \times \R^2} V(x - y) \left( \vert \alpha(x,y) \vert^2 - \vert \alpha_{\ell}(x,y) \vert^2 \right) \, \td(x,y) \notag \\
& \  = \int_{\Omega \times \R^2} V(x- y) \left( \vert \alpha(x,y)  -  \alpha_{\ell}(x,y) \vert^2 \right) \, \td(x,y) \notag \\  
& \ \ \ \ \ \ \ - 2\int_{\Omega \times \R^2} V(x-y) \left( \vert \alpha_{\ell}(x,y)\vert^2 - \operatorname{Re}\left( \overline{\alpha(x,y)}\alpha_{\ell}(x,y) \right) \right) \, \td(x,y).
\end{align*}
Next, we combine the second term on the right hand side and the first term on the right hand side of~\eqref{eq:difference}. Let $\tilde{\Delta}_{\ell}(p) = e^{i \ell \varphi} \Delta_{\ell}(p)$ where $\varphi$ denotes the angle of $p \in \mathbb{R}^2$ in polar coordinates and $\Delta_{\ell}$ is given by Eq.~\eqref{eq:gapfunction}. Inserting the equation~$\hat{\alpha}_{\ell}(p)= -\tilde{\Delta}_{\ell}(p)/(2 K_T^{\Delta_{\ell}}(p))$ which follows from Eq.~\eqref{eq:ELalpha}, we see that
\begin{align*}
\Tro & \left[\left( -\grad^2 - \mu\right) \left(\gamma - \gamma_{\ell} \right)\right]  \\
&  + 2 \operatorname{Re} \int_{\Omega \times \R^2} V(x-y) \left( \alpha_{\ell}(x,y)\overline{\alpha(x,y)} - \vert \alpha_{\ell}(x,y) \vert^2 \right) \, \td(x,y) \\ 
& = \frac{1}{2} \Trs \left[H_{\tilde{\Delta}_{\ell}} \left( \Gamma - \Gamma_{\ell}\right)\right].
\end{align*}
Here~$H_{\tilde{\Delta}_{\ell}}$ is given as in Eq.~\eqref{def:HDelta} with $\Delta_{\ell}$ replaced by $\tilde{\Delta}_{\ell}$.

At this point, it turns out to be convenient to introduce the relative entropy~$\H$, which for two BCS states $\Gamma, \tilde{\Gamma} \in \D$ is given by 
\begin{align*}
\mathcal{H}(\Gamma, \tilde{\Gamma}) = \Trs\left[ \Gamma\left(\log\Gamma - \log\tilde{\Gamma}\right) + \left(1-\Gamma\right)\left(\log\left(1-\Gamma\right) - \log\left(1-\tilde{\Gamma}\right)\right)\right].
\end{align*}

The fact that $H_{\tilde{\Delta}_{\ell}}/T = \log(1-\Gamma_{\ell})-\log\Gamma_{\ell}$ yields the following statement.

\begin{lemma} \label{lem:diff}
Let $(\gamma_{\ell}, \sigma_{\ell}) \in \mathcal{D}_{\ell}$ be a minimizer of~$\F_{\ell}^{\mathrm{ti}}$ and let~$\Gamma_{\ell}$ be given by the pair $(\gamma_{\ell}, \alpha_{\ell})$ where $\alpha_{\ell}(p) = e^{i \ell \varphi}\sigma_{\ell}(p)$. 
Then 
\begin{align*}
\F(\Gamma) - \F(\Gamma_{\ell})  & = \frac{T}{2} \mathcal{H}\left(\Gamma, \Gamma_{\ell}\right) + \int_{\Omega \times \R^2} V(x-y) \vert \alpha(x,y) - \alpha_{\ell}(x,y) \vert^2 \, \td(x,y)
\end{align*}
for all $\Gamma \in \mathcal{D}$, where $\alpha = (\Gamma)_{12}$.
\end{lemma}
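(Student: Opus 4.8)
The plan is to establish the identity in Lemma~\ref{lem:diff} by assembling the three ingredients that the text has already prepared, the only genuinely new input being the relation between the relative entropy $\mathcal{H}(\Gamma,\Gamma_{\ell})$ and the difference of entropies and energies. Starting from \eqref{eq:difference}, I would substitute the completion of the square in the interaction terms, so that the term $\int V(x-y)|\alpha(x,y)-\alpha_{\ell}(x,y)|^2\,\td(x,y)$ is split off and kept aside; it will be the second term in the claimed formula. What remains is the kinetic term $\Tro[(-\grad^2-\mu)(\gamma-\gamma_{\ell})]$ together with the cross term $-2\operatorname{Re}\int V(x-y)(|\alpha_\ell|^2-\overline{\alpha}\alpha_\ell)\,\td(x,y)$, which by the displayed computation just above the lemma equals $\tfrac12\Trs[H_{\tilde\Delta_\ell}(\Gamma-\Gamma_\ell)]$.

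The crux is then to convert $\tfrac12\Trs[H_{\tilde\Delta_\ell}(\Gamma-\Gamma_\ell)] - T(S(\Gamma)-S(\Gamma_\ell))$ into $\tfrac{T}{2}\mathcal{H}(\Gamma,\Gamma_\ell)$. Here I would use the Euler--Lagrange equation in the form $H_{\tilde\Delta_\ell}/T = \log(1-\Gamma_\ell) - \log\Gamma_\ell$, which is exactly the fact flagged in the sentence preceding the lemma (it follows from \eqref{eq:EL} with $\Delta_\ell$ replaced by $\tilde\Delta_\ell$, since conjugating by the diagonal phase $\operatorname{diag}(1,e^{-i\ell\varphi})$ intertwines $H_{\Delta_\ell}$ and $H_{\tilde\Delta_\ell}$ and does not change $\gamma_\ell$). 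Writing $S(\Gamma) = -\tfrac12\Trs[\Gamma\log\Gamma+(1-\Gamma)\log(1-\Gamma)]$ and likewise for $\Gamma_\ell$, one computes
\begin{align*}
\tfrac{T}{2}\mathcal{H}(\Gamma,\Gamma_\ell) &= \tfrac{T}{2}\Trs\big[\Gamma(\log\Gamma-\log\Gamma_\ell) + (1-\Gamma)(\log(1-\Gamma)-\log(1-\Gamma_\ell))\big] \\
&= -T S(\Gamma) - \tfrac{T}{2}\Trs\big[\Gamma\log\Gamma_\ell + (1-\Gamma)\log(1-\Gamma_\ell)\big].
\end{align*}
In the last trace I would add and subtract the same expression with $\Gamma$ replaced by $\Gamma_\ell$: the part with $\Gamma_\ell$ gives $-T S(\Gamma_\ell)$ back, and the remainder is $-\tfrac{T}{2}\Trs[(\Gamma-\Gamma_\ell)(\log\Gamma_\ell - \log(1-\Gamma_\ell))] = \tfrac{T}{2}\Trs[(\Gamma-\Gamma_\ell)\,H_{\tilde\Delta_\ell}/T] = \tfrac12\Trs[H_{\tilde\Delta_\ell}(\Gamma-\Gamma_\ell)]$. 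Rearranging yields precisely $\tfrac12\Trs[H_{\tilde\Delta_\ell}(\Gamma-\Gamma_\ell)] - T(S(\Gamma)-S(\Gamma_\ell)) = \tfrac{T}{2}\mathcal{H}(\Gamma,\Gamma_\ell)$, and combining with the interaction term split off earlier gives the lemma.

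The main obstacle is not algebraic but functional-analytic: one must justify that every trace per unit volume written above is well-defined and that the manipulations (splitting $\Trs[(\Gamma-\Gamma_\ell)(\cdots)]$, adding and subtracting $\Gamma_\ell$-terms inside $\Trs$) are legitimate when the individual pieces $\Trs[\Gamma\log\Gamma_\ell]$ etc.\ may fail to be separately finite or trace class. This is exactly the role of the generalized trace $\Trs$ with the splitting via $P_0,Q_0$, and of the relative entropy $\mathcal{H}$ being defined as a single combined quantity; I would invoke \cite[Lemma 5]{FHSS12} (the relative entropy inequality and the accompanying well-definedness statements) to ensure $\mathcal{H}(\Gamma,\Gamma_\ell)$ makes sense for all $\Gamma\in\D$ and that the rearrangement leading to it is valid, noting that $\Gamma-\Gamma_\ell$ has the good decay coming from admissibility of both states together with the exponential decay of $H_{\tilde\Delta_\ell}/T$ in momentum. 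The rest is bookkeeping already carried out in the text preceding the lemma.
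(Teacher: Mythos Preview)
Your proposal is correct and follows exactly the approach the paper takes: the text preceding the lemma already carries out the completion of the square and the identification $\tfrac12\Trs[H_{\tilde\Delta_\ell}(\Gamma-\Gamma_\ell)]$, and the paper then simply invokes $H_{\tilde\Delta_\ell}/T = \log(1-\Gamma_\ell)-\log\Gamma_\ell$ to state the lemma. You have spelled out in detail the algebraic rearrangement that the paper compresses into that single sentence, but the argument is the same.
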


Based on this identity, we estimate $\F(\Gamma) - \F(\Gamma_{\ell_0})$ from below by applying the relative entropy inequality \cite{FHSS12,HaiLewSei08}.

\begin{proposition} \label{prop:nonnegative}
Let $(\gamma_{\ell},\sigma_{\ell}) \in\mathcal{D}_{\ell}$, be a minimizer of~$\F_{\ell}^{\mathrm{ti}}$, let~$\Gamma_{\ell}$ be as in Lemma~\ref{lem:diff} and denote $V_y(x) = V(x-y)$. Then, for all $\Gamma \in \mathcal{D}$, with $\alpha = (\Gamma)_{12}$, 
\begin{align*}
\F(\Gamma) - \F(\Gamma_{\ell}) \geq & \int_{\Omega} \left\langle \alpha, \left( K_{T}^{\Delta_{\ell}} + V_y(x) \right)_x \alpha \right\rangle_{L^2(\R^2,\td x)} \, \td y  \\
& + \Tro K_T^{\Delta_{\ell}}(\Gamma - \Gamma_{\ell})^2.
\end{align*}
Here, we understand $(K_{T}^{\Delta_{\ell}} + V_y(x))_x$ as an operator acting on the $x$-coordinate of $\alpha(x,y)$.
\end{proposition}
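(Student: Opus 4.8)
The plan is to start from the identity in Lemma~\ref{lem:diff} and estimate the relative entropy from below by the relative entropy inequality of~\cite{FHSS12,HaiLewSei08}. Since $\Gamma_{\ell} = (1+e^{H_{\tilde{\Delta}_{\ell}}/T})^{-1}$, that inequality gives a lower bound for $\mathcal{H}(\Gamma,\Gamma_{\ell})$ consisting of a term linear in $\Gamma-\Gamma_{\ell}$, namely $\tfrac{1}{T}\Trs[H_{\tilde{\Delta}_{\ell}}(\Gamma-\Gamma_{\ell})]$, plus a quadratic term whose weight is an even spectral function of $H_{\tilde{\Delta}_{\ell}}/T$. The crucial observation is that, for each $p$, $H_{\tilde{\Delta}_{\ell}}(p)$ is a traceless self-adjoint $2\times2$ matrix with eigenvalues $\pm E_{\ell}(p)$, so that every even function of it is a scalar multiple of the identity on $\C^2$; applied to the function entering the relative entropy inequality this produces a term proportional to $\Trs[K_T^{\Delta_{\ell}}(\Gamma-\Gamma_{\ell})^2]$, which is exactly the shape of the last term in the Proposition.

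Next I would reorganise the linear term $\tfrac{1}{2}\Trs[H_{\tilde{\Delta}_{\ell}}(\Gamma-\Gamma_{\ell})]$ together with the interaction term of Lemma~\ref{lem:diff}. Reversing the computation carried out just before Lemma~\ref{lem:diff}, the cross terms between $\alpha$ and $\alpha_{\ell}$ cancel and one is left with $\Tro[(-\grad^2-\mu)(\gamma-\gamma_{\ell})] + \int_{\Omega\times\R^2}V(x-y)\,|\alpha(x,y)|^2\,\td(x,y) - \int_{\R^2}V(x)\,|\alpha_{\ell}(x)|^2\,\td x$. By the gap equation~\eqref{eq:gap} one has $\int_{\R^2}V(x)|\alpha_{\ell}(x)|^2\,\td x = -\Tro[K_T^{\Delta_{\ell}}\alpha_{\ell}\alpha_{\ell}^*]$, while the middle integral equals $\int_{\Omega}\langle\alpha,(V_y)_x\,\alpha\rangle_{L^2(\R^2,\td x)}\,\td y$. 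For the kinetic term I would use the Euler--Lagrange relations~\eqref{eq:ELgamma1}--\eqref{eq:ELalpha}: since $p^2-\mu = K_T^{\Delta_{\ell}}(p)\bigl(1-2\hat{\gamma}_{\ell}(p)\bigr)$ and $\hat{\gamma}_{\ell}$, $K_T^{\Delta_{\ell}}$ are radial Fourier multipliers and hence commute, a short computation with the per-unit-volume trace yields $\Tro[(-\grad^2-\mu)(\gamma-\gamma_{\ell})] = \Tro[K_T^{\Delta_{\ell}}\gamma(1-\gamma)] - \Tro[K_T^{\Delta_{\ell}}\gamma_{\ell}(1-\gamma_{\ell})] + \Tro[K_T^{\Delta_{\ell}}(\gamma-\gamma_{\ell})^2]$. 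Writing $\gamma(1-\gamma) = \bigl(\Gamma(1-\Gamma)\bigr)_{11} + \alpha\alpha^*$ and using $\int_{\Omega}\langle\alpha,(K_T^{\Delta_{\ell}})_x\alpha\rangle\,\td y = \Tro[K_T^{\Delta_{\ell}}\alpha\alpha^*]$, the $\alpha\alpha^*$ piece combines with the potential term to produce precisely $\int_{\Omega}\langle\alpha,(K_T^{\Delta_{\ell}}+V_y)_x\alpha\rangle\,\td y$, and the state $\Gamma_{\ell}$ contributes the explicit scalar $\hat{\gamma}_{\ell}(1-\hat{\gamma}_{\ell})-|\hat{\alpha}_{\ell}|^2 = \bigl(4\cosh^2(E_{\ell}/2T)\bigr)^{-1}$.

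What remains is to show that everything left over beyond $\int_{\Omega}\langle\alpha,(K_T^{\Delta_{\ell}}+V_y)_x\alpha\rangle\,\td y + \Tro K_T^{\Delta_{\ell}}(\Gamma-\Gamma_{\ell})^2$ is nonnegative. The available positivity is twofold: $K_T^{\Delta_{\ell}}\ge0$ together with $0\le\Gamma\le1$, which forces $\bigl(\Gamma(1-\Gamma)\bigr)_{11}=\gamma(1-\gamma)-\alpha\alpha^*\ge0$ and hence $\Tro[K_T^{\Delta_{\ell}}(\Gamma(1-\Gamma))_{11}]\ge0$; and the quadratic remainder in the relative entropy inequality. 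These are combined to absorb the remaining terms, namely the $K_T^{\Delta_{\ell}}$-weighted square of $\gamma-\gamma_{\ell}$ and the explicit $\cosh^{-2}$-term. I expect this to be the main obstacle: a crude estimate $\alpha\alpha^*\le\gamma(1-\gamma)$ is not sufficient, since at $\Gamma=\Gamma_{\ell}$ one has $\alpha_{\ell}\alpha_{\ell}^*<\gamma_{\ell}(1-\gamma_{\ell})$ strictly and the estimate has to be saturated there, so one must use the relative entropy inequality in its sharp form and balance it precisely against the $\Gamma(1-\Gamma)$-positivity. A secondary, purely technical, difficulty is that several of the traces above (for instance $\Tro[(-\grad^2-\mu)(\gamma-\gamma_{\ell})]$ and $\Tro[K_T^{\Delta_{\ell}}\gamma_{\ell}(1-\gamma_{\ell})]$) are only conditionally convergent, so all manipulations must be performed with the regularised trace $\Trs$; finiteness and the legitimacy of the cyclic permutations and block decompositions follow from admissibility of $\Gamma$, from $V\in L^2(\R^2)$ being form bounded relative to $-\grad^2$, and from $K_T^{\Delta_{\ell}}(p)\lesssim 1+p^2$.
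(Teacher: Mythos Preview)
Your first paragraph contains the right idea and, in fact, essentially the whole proof. The relative entropy inequality of \cite[Lemma~5]{FHSS12} for a Gibbs state $\Gamma_{\ell}=(1+e^{H_{\tilde\Delta_\ell}/T})^{-1}$ is purely quadratic:
\[
\mathcal{H}(\Gamma,\Gamma_{\ell})\ \geq\ \frac{1}{T}\,\Trs\!\left[(\Gamma-\Gamma_{\ell})\,\frac{H_{\tilde\Delta_\ell}}{\tanh\bigl(H_{\tilde\Delta_\ell}/(2T)\bigr)}\,(\Gamma-\Gamma_{\ell})\right].
\]
There is no linear term $\tfrac{1}{T}\Trs[H_{\tilde\Delta_\ell}(\Gamma-\Gamma_{\ell})]$ here; that linear piece is precisely what was used to pass from the difference of entropies to the relative entropy in Lemma~\ref{lem:diff} (via $H_{\tilde\Delta_\ell}/T=\log(1-\Gamma_{\ell})-\log\Gamma_{\ell}$), so inserting it again amounts to double counting. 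Your observation that $H_{\tilde\Delta_\ell}(p)^2=E_\ell(p)^2\,\mathbb{I}_{\C^2}$ makes the weight equal to $K_T^{\Delta_\ell}\,\mathbb{I}_{\C^2}$, and the quadratic form then splits along the block structure of $\Gamma-\Gamma_{\ell}$. The off-diagonal contribution gives $\int_\Omega\langle\alpha-\alpha_\ell,(K_T^{\Delta_\ell})_x(\alpha-\alpha_\ell)\rangle\,\td y$, which combined with the interaction term from Lemma~\ref{lem:diff} yields $\int_\Omega\langle\alpha-\alpha_\ell,(K_T^{\Delta_\ell}+V_y)_x(\alpha-\alpha_\ell)\rangle\,\td y$. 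Since $(K_T^{\Delta_\ell}+V)\alpha_\ell=0$ by~\eqref{eq:gapalpha}, the cross terms and the $\alpha_\ell$--term vanish and one is left with exactly $\int_\Omega\langle\alpha,(K_T^{\Delta_\ell}+V_y)_x\alpha\rangle\,\td y$. The remaining block contributions are the second term in the Proposition. This is the paper's argument.

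Once the spurious linear term is removed, paragraphs~2 and~3 of your plan become superfluous. The ``main obstacle'' you anticipate (balancing $\alpha\alpha^*\leq\gamma(1-\gamma)$ against a $\cosh^{-2}$--term) is an artefact of that misreading; in the correct version nothing of the sort arises, because the lower bound produced by the relative entropy inequality is already exactly of the shape claimed in the Proposition.
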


\begin{proof}
The claimed estimate is a consequence of an inequality for the relative entropy that has been proven in~\cite[Lemma 5]{FHSS12}. An application of this inequality yields
\begin{align*}
\F(\Gamma) - \F(\Gamma_{\ell}) & \geq \frac{1}{2} \Tro \left[ \left( \Gamma - \Gamma_{\ell}\right) \frac{H_{\tilde{\Delta}_{\ell}}}{\tanh\left(H_{\tilde{\Delta}_{\ell}}/(2T) \right)}  \left( \Gamma - \Gamma_{\ell}\right)\right] \\
 & \ \ \ + \int_{\Omega \times \R^2} V(x-y) \vert \alpha(x,y) - \alpha_{\ell}(x,y) \vert^2 \, \td(x,y). 
\end{align*}
The fact that $x \mapsto x(\tanh(x/2))^{-1}$ is an even function and 
\begin{align*}
H_{\tilde{\Delta}_{\ell}}^2(p) = \mathbb{I}_{\C^2} E^2_{\ell}(p)
\end{align*}
is diagonal, implies the statement.
\end{proof}

Next, we show that the operator $\smash{K^{\Delta_{\ell_0}}_T} + V$ is nonnegative for $T \in [\tilde{T}, T_c)$. 

\begin{proposition}\label{prop:nonnegativity}
Assume $V \in L^2(\mathbb{R}^2)$ and $\hat{V} \in L^r(\R^2)$ for some $r \in [1,2)$. If the lowest eigenvalue of $K_{T_c} + V$ is at most twice degenerate then there exists $\tilde{T} < T_c$ such that $\smash{K^{\Delta_{\ell_0}}_T} + V$ is nonnegative as an operator on $L^2(\R^2)$ for all $T \in [\tilde{T}, T_c)$.
\end{proposition}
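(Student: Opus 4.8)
The plan is to show that the gap function $\Delta_{\ell_0}$, which is produced by the minimizer $(\gamma_{\ell_0},\sigma_{\ell_0})$ of $\F_{\ell_0}^{\mathrm{ti}}$, becomes uniformly small in a suitable norm as $T \uparrow T_c$, and that smallness of $\Delta_{\ell_0}$ pushes $K_T^{\Delta_{\ell_0}} + V$ up from $K_T + V$ by a controlled amount. First I would record the pointwise comparison $0 \le K_T(p) \le K_T^{\Delta_{\ell_0}}(p)$, which follows from monotonicity of $x \mapsto x/\tanh(x/(2T))$ in $|x|$ together with $E_{\ell_0}(p) \ge |p^2-\mu|$; hence $K_T^{\Delta_{\ell_0}} - K_T \ge 0$, and more quantitatively $0 \le K_T^{\Delta_{\ell_0}}(p) - K_T(p) \le c\,|\Delta_{\ell_0}(p)|^2/K_T(p)$ near the Fermi surface and $\le |\Delta_{\ell_0}(p)|^2/(2T)$ globally (these are standard estimates on the function $K_T^\Delta$, e.g. as in \cite{HaiSei16,HHSS08}). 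Thus $K_T^{\Delta_{\ell_0}} + V = (K_T + V) + (K_T^{\Delta_{\ell_0}} - K_T)$, and since $K_{T_c} + V \ge 0$ and $K_T$ is monotone increasing in $T$, for $T$ slightly below $T_c$ the operator $K_T + V$ has exactly one negative eigenvalue $\lambda_0(T) < 0$ (using that the lowest eigenvalue of $K_{T_c}+V$ is at most twice degenerate and, in the degenerate case $\ell_0 \ne 0$, that both copies belong to the $\pm\ell_0$ sector so they cross zero together). Choosing $\tilde T$ to be the temperature at which the \emph{second} eigenvalue (counted without multiplicity) of $K_T + V$ reaches zero, we have for $T \in [\tilde T, T_c)$ a uniform spectral gap: $K_T + V \ge -C$ on the at-most-two-dimensional negative eigenspace and $K_T + V \ge \kappa > 0$ on its orthogonal complement, with $\kappa$ bounded below uniformly on $[\tilde T, T_c)$ after possibly shrinking the interval.

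Next I would control $\Delta_{\ell_0}$. From the Euler--Lagrange equation $(K_T^{\Delta_{\ell_0}} + V_{\ell_0})\sigma_{\ell_0} = 0$ and $\sigma_{\ell_0} = -\Delta_{\ell_0}/(2K_T^{\Delta_{\ell_0}})$ one gets a closed equation for $\Delta_{\ell_0}$; pairing with $\sigma_{\ell_0}$ and using $K_T^{\Delta_{\ell_0}} \ge K_T$ gives $\langle \sigma_{\ell_0}, (K_T + V)\sigma_{\ell_0}\rangle \le 0$, so $\sigma_{\ell_0}$ (hence $\Delta_{\ell_0}$, since $\hat V \in L^r$ with $r<2$ lets us bound $\Delta_{\ell_0}=2(2\pi)^{-1}\hat V * \hat\alpha_{\ell_0}$ in $L^\infty$ via Young/Hölder) has its mass concentrated, up to a small error, in the negative eigenspace of $K_T + V$, whose size shrinks to a point as $T \uparrow T_c$. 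A continuity/compactness argument then shows $\|\Delta_{\ell_0}\|_\infty \to 0$ as $T \uparrow T_c$; this is where the integrability hypothesis $\hat V \in L^r(\R^2)$, $r \in [1,2)$, enters, as flagged in the remark. With $\|\Delta_{\ell_0}\|_\infty$ small, the perturbation $K_T^{\Delta_{\ell_0}} - K_T$ is bounded by $\|\Delta_{\ell_0}\|_\infty^2/(2T) \le \|\Delta_{\ell_0}\|_\infty^2/(2\tilde T)$, which is smaller than the gap $\kappa$ on the orthogonal complement once $T$ is close enough to $T_c$.

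It then remains to glue the pieces: on the orthogonal complement of the negative eigenspace, $K_T^{\Delta_{\ell_0}} + V \ge (K_T + V) \ge \kappa > 0$ (using only $K_T^{\Delta_{\ell_0}} \ge K_T$), so positivity there is automatic and no smallness is needed. On the negative eigenspace one has to work harder: one uses that on the eigenfunction(s) $\psi$ of $\lambda_0(T)$, the additional term satisfies $\langle \psi, (K_T^{\Delta_{\ell_0}} - K_T)\psi\rangle \ge c \int |\Delta_{\ell_0}(p)|^2 |\psi(p)|^2 \,dp / K_T(p)$, and one must show this dominates $|\lambda_0(T)| = |\langle \psi,(K_T+V)\psi\rangle|$. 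The key identity linking the two is again the gap equation: $\Delta_{\ell_0}$ does not vanish precisely because $K_T + V_{\ell_0}$ has a negative eigenvalue, and a quantitative version (as in the analysis of $T_c$ near threshold in \cite{HHSS08,HaiSei08a}) gives $|\lambda_0(T)| \lesssim \|\Delta_{\ell_0}\|_\infty^2$ with a matching lower bound for the quadratic form of the perturbation, with the implicit constants uniform on $[\tilde T, T_c)$. Combining with a Schur-complement / block decomposition with respect to the (at most two-dimensional) negative eigenspace — exactly the structure reflected by the "at most twice degenerate" hypothesis — yields $K_T^{\Delta_{\ell_0}} + V \ge 0$ for $T$ in a left-neighborhood of $T_c$, and after adjusting $\tilde T$ this holds on all of $[\tilde T, T_c)$.

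The main obstacle is the quantitative matching on the negative eigenspace: one needs that the upward shift from replacing $K_T$ by $K_T^{\Delta_{\ell_0}}$, which is of order $\|\Delta_{\ell_0}\|_\infty^2$ \emph{with the right constant and localized correctly near the Fermi surface}, genuinely compensates the negative eigenvalue $\lambda_0(T)$, also of order $\|\Delta_{\ell_0}\|_\infty^2$, uniformly as $T \uparrow T_c$; this requires a careful Taylor expansion of $K_T^\Delta(p)$ in $|\Delta(p)|^2$ and uniform control of the eigenfunction $\psi$ (its concentration near $|p|^2 = \mu$), using the $\hat V \in L^r$ hypothesis to keep $\Delta_{\ell_0}$ bounded and equicontinuous.
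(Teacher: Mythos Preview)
Your proposal correctly identifies several ingredients — the monotonicity $K_T^{\Delta_{\ell_0}} \ge K_T$, the smallness $\|\Delta_{\ell_0}\|_\infty \to 0$ as $T \uparrow T_c$, and the role of the $\hat V \in L^r$ hypothesis — but the strategy of matching constants on the negative eigenspace of $K_T + V$ is a genuine gap that you do not close. You need the upward shift $\langle \psi, (K_T^{\Delta_{\ell_0}} - K_T)\psi\rangle$ on the eigenfunction $\psi$ to dominate $|\lambda_0(T)|$, both being of order $\|\Delta_{\ell_0}\|_\infty^2$; you yourself flag this as ``the main obstacle'' and offer no mechanism for why the implicit constants cooperate. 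A Schur complement also introduces a further negative correction from the off-diagonal blocks $P(K_T^{\Delta_{\ell_0}} - K_T)Q$, which you would have to absorb as well. There is no a priori reason these quantitative balances hold — and in fact they need not hold for an \emph{arbitrary} $\Delta$; what makes $\Delta_{\ell_0}$ special is precisely that it solves the gap equation, and you are not exploiting this in the right way.

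The paper avoids the constant-matching problem entirely by a much cleaner argument. From Lemmas~\ref{lem:AB1} and~\ref{lem:AB2} one gets $\|K_T^{\Delta_{\ell_0}} - K_{T_c}\| \lesssim (T_c - T)^{1/2}$, hence $K_T^{\Delta_{\ell_0}} + V \to K_{T_c} + V$ in norm resolvent sense as $T \uparrow T_c$. By eigenvalue stability under norm resolvent convergence, for $T$ close to $T_c$ the operator $K_T^{\Delta_{\ell_0}} + V$ has exactly two eigenvalues (counted with multiplicity) in a small ball around $0$ and the rest of its spectrum is bounded away from $0$. Now the Euler--Lagrange equation \eqref{eq:gapalpha} says $(K_T^{\Delta_{\ell_0}} + V)\alpha_{\pm\ell_0} = 0$, so both of these eigenvalues are \emph{exactly} zero. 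That is the whole proof: the gap equation pins the low-lying eigenvalues to $0$, so there is nothing to match. Your use of the gap equation — only to produce the order-of-magnitude bound $|\lambda_0(T)| \lesssim \|\Delta_{\ell_0}\|_\infty^2$ — throws away the exact information that makes the argument work.
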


The proof of Proposition \ref{prop:nonnegativity} is based on spectral perturbation theory and relies on the fact that $\smash{K^{\Delta_{\ell_0}}_{T}} + V \to K_{T_c} + V$, while $\Delta_{\ell_0}(T) \to 0$, in norm resolvent sense for $T \to T_c$. We will derive this convergence from the following lemmas. 
In order to simplify the notation we write $a \lesssim b$ if there exists a constant $c > 0$ such that $a \leq cb$. Moreover, we denote by $\| \cdot \|$ the operator norm and by $\|\cdot \|_{r}$ the $L^r(\R^2)$-norm.

\begin{lemma} \label{lem:AB1}
Let $T \in (0,T_c)$. The operators $K_{T_c} - K_{T}$ and $\smash{K^{\Delta_{\ell_0}}_{T}} - K_{T}$ are bounded. More precisely, $\| K_{T_c} - K_{T} \| \lesssim (T_c - T)$ and $\| \smash{K^{\Delta_{\ell_0}}_{T}} - K_{T} \| \lesssim \|\Delta_{\ell_0} \|_{\infty}$. Moreover, $K_{T_c} - K_{T} \geq 0$ and $\smash{K^{\Delta_{\ell_0}}_{T}} - K_{T} \geq 0$.
\end{lemma}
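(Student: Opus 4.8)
The statement concerns three facts about $K_{T_c}-K_T$ and $K_T^{\Delta_{\ell_0}}-K_T$: boundedness, quantitative bounds on the operator norm, and nonnegativity. Since all four operators $K_{T_c}$, $K_T$, $K_T^{\Delta_{\ell_0}}$ are Fourier multipliers (the first two) or multiplication operators in Fourier space by radial functions (the last one acts by $K_T^{\Delta_{\ell_0}}(p)$), everything reduces to pointwise estimates on the corresponding symbols. So my plan is to work entirely at the level of symbols: $K_{T_c}(p)-K_T(p)$ and $K_T^{\Delta_{\ell_0}}(p)-K_T(p)$ as functions of $p\in\R^2$, and to show each of these functions is bounded with the claimed bound, and nonnegative.

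\textbf{Step 1: a scalar function and its monotonicity.} Introduce $f(E,T) = E\coth(E/(2T)) = E/\tanh(E/(2T))$ for $E\ge 0$, $T>0$, extended to $E=0$ by $f(0,T)=2T$. Note $K_T(p) = f(|p^2-\mu|,T)$ since $f$ is even in its first argument, and $K_T^{\Delta_{\ell_0}}(p) = f(E_{\ell_0}(p),T)$ with $E_{\ell_0}(p) = \sqrt{(p^2-\mu)^2+|\Delta_{\ell_0}(p)|^2}\ge |p^2-\mu|$. The key monotonicity facts I would establish are: (i) $T\mapsto f(E,T)$ is nondecreasing (this is already noted in the text for $K_T$, and the argument is the same: $\coth$ is decreasing but multiplied in a way that... actually one checks $\partial_T f \ge 0$ directly, or uses the known monotonicity of $K_T$ in $T$); and (ii) $E\mapsto f(E,T)$ is nondecreasing on $[0,\infty)$. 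Fact (i) immediately gives $K_{T_c}-K_T\ge 0$ for $T<T_c$. Facts (i) and (ii) together give $K_T^{\Delta_{\ell_0}}(p) = f(E_{\ell_0}(p),T) \ge f(|p^2-\mu|,T) = K_T(p)$, i.e. $K_T^{\Delta_{\ell_0}}-K_T\ge 0$.

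\textbf{Step 2: quantitative bounds.} For $\|K_{T_c}-K_T\|$: since $K_{T_c}$ and $K_T$ are multipliers, the operator norm equals $\sup_p |K_{T_c}(p)-K_T(p)| = \sup_{E\ge 0}(f(E,T_c)-f(E,T))$. I would bound $f(E,T_c)-f(E,T) = \int_T^{T_c}\partial_T f(E,t)\,dt$ by showing $\partial_T f(E,t)$ is bounded uniformly in $E\ge 0$ and $t$ in a bounded interval around $T_c$; a direct computation gives $\partial_T f(E,t) = \frac{E^2}{2t^2}\,\mathrm{csch}^2(E/(2t)) = \frac{E^2/(2t^2)}{\sinh^2(E/(2t))}$, and the function $x\mapsto x^2/\sinh^2 x$ is bounded by $1$ on $[0,\infty)$, so $\partial_T f(E,t) \le \frac{1}{2}\cdot\frac{1}{t^2}\cdot(2t)^2\cdot\frac{1}{4}\cdot$ ... more carefully $\partial_T f(E,t) = \frac{2(E/(2t))^2}{\sinh^2(E/(2t))}\le 2$, uniformly in $E$. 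Hence $\|K_{T_c}-K_T\| \le 2(T_c-T)$, giving the claimed $\lesssim (T_c-T)$. For $\|K_T^{\Delta_{\ell_0}}-K_T\|$: by the mean value theorem in the first argument, $f(E_{\ell_0}(p),T)-f(|p^2-\mu|,T) = \partial_E f(\xi,T)\,(E_{\ell_0}(p)-|p^2-\mu|)$ for some intermediate $\xi$, and $0\le E_{\ell_0}(p)-|p^2-\mu| \le \frac{|\Delta_{\ell_0}(p)|^2}{|p^2-\mu|+E_{\ell_0}(p)}$... that blows up near the Fermi surface, so instead I use $E_{\ell_0}(p) - |p^2-\mu| \le |\Delta_{\ell_0}(p)|$ directly (from $\sqrt{a^2+b^2}\le a+b$). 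Combined with the bound $\|\partial_E f(\cdot,T)\|_\infty \le C$ for $T$ near $T_c$ (again $\partial_E f$ is bounded: $\partial_E f(E,T) = \coth(E/(2T)) - \frac{E}{2T}\mathrm{csch}^2(E/(2T))$, which is bounded on $[0,\infty)$, indeed it is between $0$ and $1$), this yields $|K_T^{\Delta_{\ell_0}}(p)-K_T(p)| \le \|\Delta_{\ell_0}\|_\infty$ uniformly in $p$, hence $\|K_T^{\Delta_{\ell_0}}-K_T\| \le \|\Delta_{\ell_0}\|_\infty$ as claimed.

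\textbf{Main obstacle.} There is no deep obstacle here; the lemma is a collection of elementary pointwise estimates on the explicit symbol $f(E,T)=E\coth(E/2T)$. The only mildly delicate points are (a) handling the behavior at $E=0$ (the Fermi surface $p^2=\mu$), where one must use the continuous extension $f(0,T)=2T$ and check the derivative bounds remain uniform there — this is fine because $x\mapsto x\coth x$, $x\mapsto x^2/\sinh^2 x$, etc. are all smooth and bounded near $0$; and (b) making sure the constants in the $\partial_T f$ and $\partial_E f$ bounds are uniform for $T$ in a neighborhood of $T_c$ rather than all $T>0$ — harmless since we only need $T\in(0,T_c)$ with $T_c$ fixed, though in fact the bounds $\partial_T f\le 2$ and $0\le\partial_E f\le 1$ hold for all $T>0$ anyway. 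I would also remark that $\Delta_{\ell_0}\in L^\infty$ — needed for the second estimate to be meaningful — follows from the Euler–Lagrange equation \eqref{eq:gapfunction} together with $V\in L^2(\R^2)$ and $\sigma_{\ell_0}\in L^2$, or is established elsewhere in the paper; I would point to the relevant place rather than reprove it here.
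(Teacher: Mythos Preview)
Your proof is correct and follows essentially the same approach as the paper: both reduce everything to pointwise estimates on the symbol $E\coth(E/(2T))$, use its monotonicity in $E$ and in $T$ for the nonnegativity, and bound the differences via the derivatives $\partial_T f\le 2$ and $0\le\partial_E f\le 1$ together with $E_{\ell_0}(p)-|p^2-\mu|\le |\Delta_{\ell_0}(p)|$. The only cosmetic difference is that the paper writes the fundamental theorem of calculus in integral form (and parametrizes the $T$-difference via $I_T=1/T-1/T_c$ rather than integrating over $T$), whereas you invoke the mean value theorem directly.
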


\begin{proof}
In the proof we abbreviate $A_{T} \coloneqq  K_{T_c} - K_{T}$ and $B_{T} \coloneqq \smash{K^{\Delta_{\ell_0}}_{T}} - K_{T}$.
Notice that
\begin{align*}
K^{\Delta_{\ell_0}}_T(p) = \frac{\sqrt{k(p)^2 + \vert \Delta_{\ell_0}(p) \vert^2}}{\tanh\left( \sqrt{k(p)^2 + \vert \Delta_{\ell_0}(p) \vert^2}/(2T) \right)}
\end{align*}
is an increasing function in~$T$ for fixed~$\Delta_{\ell_0}$ and vice versa. Hence $A_{T} \geq 0$ and $B_{T} \geq 0$. Both, $A_T$ and~$B_T$ are pseudo-differential operators and by a slight abuse of notation we denote by~$A_T(p)$ and~$B_T(p)$ the symbols of~$A_T$ and~$B_T$, respectively.
In the following we abbreviate $T_c - T = \delta T$ and
\begin{align*}
I_{T} = \frac{1}{T} - \frac{1}{T_c}. 
\end{align*}
A simple calculation yields
\begin{align*}
A_T(p) = \int_0^1 \frac{I_{T} k(p)^2}{2\sinh^2\left(k(p)/(2T_c) + t I_{T} k(p)/2 \right)} \, \td t.
\end{align*}
Obviously, for large~$\vert p \vert$ the smooth function $A: p \mapsto A(p)$ and all its derivatives have exponential decay. Moreover, $\vert I_{T} \vert \lesssim T_c-T$ implies $\| A_T \| \lesssim T_c - T$. 
In order to derive an analogous representation for~$B_T(p)$ we define
\begin{align} \label{eq:der1}
f(x) \coloneqq \frac{\td}{\td x} \frac{x}{\tanh(x/(2T))} = \frac{T\sinh(x/T) - x}{2T \sinh^2(x/(2T))}  
\end{align} 
as well as
\begin{align} \label{def:deltaE}
\delta E_{\ell_0}(p) = 
\sqrt{k(p)^2 + \vert \Delta_{\ell_0}(p) \vert^2} - \vert k(p) \vert.
\end{align}
A straightforward calculation shows that
\begin{align} \label{eq:B}
B_T(p) =  \delta E_{\ell_0}(p) \, \int_0^1 f(\vert k(p) \vert + t \delta E_{\ell_0}(p))  \, \td t.
\end{align}
Since the function~$f$ defined in~\eqref{eq:der1} is bounded by~$1$, we find that $\vert B_T(p) \vert \leq  \vert \delta E_{\ell_0}(p) \vert$ for all $p \in \R^2$. It can be seen directly from the definition of $\delta E_{\ell_0}(p)$, see~\eqref{def:deltaE}, that $\vert \delta E_{\ell_0}(p) \vert \leq  \vert \Delta_{\ell_0}(p)\vert$ for all $p \in \R^2$, which implies $\vert B_T(p) \vert \leq \vert \Delta_{\ell_0}(p) \vert$ for all $p \in \R^2$. 
\end{proof}

\begin{lemma} \label{lem:AB1/2}
Let $T \in (0,T_c)$. If $\alpha_{\ell_0}$ is a solution of the BCS gap equation in the form of Eq.~\eqref{eq:gapalpha}, then $\|(1 + p^2)^{1/4} \hat{\alpha}_{\ell_0}\|_4^4 \lesssim \langle \alpha_{\ell_0}, (\smash{K^{\Delta_{\ell_0}}_{T}} - K_{T})  \alpha_{\ell_0} \rangle$.
\end{lemma}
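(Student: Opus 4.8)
The claim is a quantitative statement comparing the quartic Sobolev norm of $\hat\alpha_{\ell_0}$ with the quadratic form $\langle \alpha_{\ell_0}, (K_T^{\Delta_{\ell_0}} - K_T)\alpha_{\ell_0}\rangle$. The starting point is the symbol representation \eqref{eq:B} from Lemma~\ref{lem:AB1}, namely $B_T(p) = \delta E_{\ell_0}(p)\int_0^1 f(|k(p)| + t\,\delta E_{\ell_0}(p))\,\td t$, together with the pointwise bounds $|\delta E_{\ell_0}(p)| \le |\Delta_{\ell_0}(p)|$ and the identity $\sigma_{\ell_0}(p) = -\Delta_{\ell_0}(p)/(2K_T^{\Delta_{\ell_0}}(p))$ from \eqref{eq:ELalpha}. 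Since $\langle \alpha_{\ell_0}, B_T \alpha_{\ell_0}\rangle = \int_{\R^2} B_T(p)\,|\hat\alpha_{\ell_0}(p)|^2\,\td p = \int_{\R^2} B_T(p)\,|\sigma_{\ell_0}(p)|^2\,\td p$, the task is to bound $|\hat\alpha_{\ell_0}(p)|^4(1+p^2)$ pointwise by $B_T(p)|\hat\alpha_{\ell_0}(p)|^2$ up to an integrable remainder, or more precisely to run the comparison after integration.

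First I would establish a lower bound $B_T(p) \gtrsim \delta E_{\ell_0}(p)^2 / (1 + |k(p)|)$ (or a similar form), by noting that for small arguments $f(x) \approx x/(6T^2)$ is comparable to $x$, while for large arguments $f(x)$ decays but $\delta E_{\ell_0}(p) \le |\Delta_{\ell_0}(p)|^2/(2|k(p)|)$ becomes small there, so the product $\delta E_{\ell_0}(p) f(\cdot)$ is controlled from below by something like $|\Delta_{\ell_0}(p)|^2$ times a factor decaying like $(1+|k(p)|)^{-1}$; one has to be a little careful in the crossover region $|k(p)| \sim |\Delta_{\ell_0}(p)|$. Combined with $|\hat\alpha_{\ell_0}(p)|^2 = |\Delta_{\ell_0}(p)|^2/(4 K_T^{\Delta_{\ell_0}}(p)^2)$ and the elementary bounds $K_T^{\Delta_{\ell_0}}(p) \ge E_{\ell_0}(p) \gtrsim 1 + |k(p)| \gtrsim \sqrt{1+p^2}$ (using $k(p) = p^2 - \mu$, so $1 + |k(p)|$ is comparable to $1 + p^2$ for large $p$ and bounded below away from zero), this gives $B_T(p)|\hat\alpha_{\ell_0}(p)|^2 \gtrsim |\Delta_{\ell_0}(p)|^2 |\hat\alpha_{\ell_0}(p)|^2/(1+|k(p)|)$.

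On the other hand, $|\hat\alpha_{\ell_0}(p)|^2 (1+p^2)^{1/2} = |\Delta_{\ell_0}(p)|^2 (1+p^2)^{1/2}/(4 K_T^{\Delta_{\ell_0}}(p)^2)$, and since $K_T^{\Delta_{\ell_0}}(p)^2 \gtrsim (1+|k(p)|)^2 \gtrsim (1+p^2)$, we get $(1+p^2)^{1/2}|\hat\alpha_{\ell_0}(p)|^2 \lesssim |\Delta_{\ell_0}(p)|^2/(1+|k(p)|)$, which matches the lower bound on $B_T(p)$ above. Therefore $(1+p^2)^{1/2}|\hat\alpha_{\ell_0}(p)|^4 \lesssim B_T(p)|\hat\alpha_{\ell_0}(p)|^2$ pointwise; integrating over $p \in \R^2$ and recognizing the left side as $\|(1+p^2)^{1/4}\hat\alpha_{\ell_0}\|_4^4$ and the right side as $\langle \alpha_{\ell_0}, (K_T^{\Delta_{\ell_0}} - K_T)\alpha_{\ell_0}\rangle$ yields the claim. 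Note the gap equation \eqref{eq:gapalpha} is used only implicitly, to justify that $\alpha_{\ell_0}$ and its gap function $\Delta_{\ell_0}$ are related by \eqref{eq:ELalpha} (equivalently \eqref{eq:ELgamma1}–\eqref{eq:ELalpha}), which is what lets us substitute $\hat\alpha_{\ell_0} = -\Delta_{\ell_0}/(2K_T^{\Delta_{\ell_0}})$.

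\textbf{Main obstacle.} The delicate point is the pointwise lower bound on $B_T(p)$ in the crossover regime where $|k(p)|$ and $|\Delta_{\ell_0}(p)|$ are comparable: there one cannot simply use $f(x) \sim x$ (valid for $x$ small) nor the large-$|k|$ smallness of $\delta E_{\ell_0}$, and one must check that $\delta E_{\ell_0}(p)\int_0^1 f(|k(p)|+t\,\delta E_{\ell_0}(p))\,\td t$ is genuinely bounded below by a constant times $|\Delta_{\ell_0}(p)|^2/(1+|k(p)|)$ uniformly — in particular uniformly in $T$ on $[\tilde T, T_c)$, since $f$ depends on $T$ and one needs the implicit constants independent of $T$ (which is fine because $T$ stays in a compact interval bounded away from $0$, given $T_c > 0$). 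A clean way to handle this is to split $\R^2$ into the region $\{|k(p)| \le 1\}$ and $\{|k(p)| > 1\}$ and treat each with the appropriate asymptotics of $f$, using monotonicity of $x \mapsto x(\tanh(x/2T))^{-1}$ to get clean one-sided bounds.
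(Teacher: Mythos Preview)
Your overall strategy---establish the pointwise lower bound $B_T(p)\gtrsim |\Delta_{\ell_0}(p)|^2/(1+|k(p)|)$ and then convert via $|\hat\alpha_{\ell_0}|=|\Delta_{\ell_0}|/(2K_T^{\Delta_{\ell_0}})$ and $K_T^{\Delta_{\ell_0}}\gtrsim 1+|k|$---is exactly the route the paper takes (compare \eqref{est:below2} and the lines following it). The gap is that your plan for the lower bound on $B_T$ is incomplete: the inequality $B_T(p)\gtrsim |\Delta_{\ell_0}(p)|^2/(1+|k(p)|)$ simply fails unless one knows in advance that $\|\Delta_{\ell_0}\|_\infty$ is bounded by a constant independent of $T$. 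Indeed, near the Fermi surface ($|k(p)|\lesssim 1$) one has $K_T^{\Delta_{\ell_0}}-K_T\le K_T^{\Delta_{\ell_0}}\sim E_{\ell_0}\sim |\Delta_{\ell_0}|$ when $|\Delta_{\ell_0}|$ is large, whereas the claimed right side is $\sim |\Delta_{\ell_0}|^2$; so the bound cannot hold with a uniform constant in that regime. Your proposed splitting into $\{|k|\le 1\}$ and $\{|k|>1\}$ does not address this, because the obstruction lives entirely inside $\{|k|\le 1\}$ and concerns the size of $|\Delta_{\ell_0}|$, not of $|k|$.

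The paper closes this gap by first proving the a~priori bound $\|\Delta_{\ell_0}\|_\infty\le m=m(V,\mu)$, uniformly in $T\in(0,T_c)$. This is a genuine piece of work: one uses $\|\Delta_{\ell_0}\|_\infty\le \|V\|_2\|\hat\alpha_{\ell_0}\|_2$, the monotonicity of $|\Delta|\mapsto |\Delta|/K_T^{\Delta}$, and the relation $|\hat\alpha_{\ell_0}|=|\Delta_{\ell_0}|/(2K_T^{\Delta_{\ell_0}})$ to bootstrap a uniform bound on $\|\hat\alpha_{\ell_0}\|_2$ and hence on $\|\Delta_{\ell_0}\|_\infty$. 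With $m$ in hand, the paper then exploits that $|\Delta|\mapsto (K_T^{\Delta}-K_T)/|\Delta|^2$ is decreasing to reduce the lower bound to the single fixed level $|\Delta|=m$, where an explicit computation with the function $g(y)=\frac{1}{y\tanh y}-\frac{1}{\sinh^2 y}\ge \frac{2}{3}(1+y)^{-1}$ gives \eqref{est:below2} cleanly. This monotonicity-plus-a-priori-bound trick is the idea your sketch is missing; without it, the ``crossover'' analysis you flag as the main obstacle cannot be completed uniformly.

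Two smaller points. First, $\|(1+p^2)^{1/4}\hat\alpha_{\ell_0}\|_4^4=\int (1+p^2)\,|\hat\alpha_{\ell_0}|^4\,\td p$, so the pointwise inequality you need is $(1+p^2)|\hat\alpha_{\ell_0}|^2\lesssim B_T$, not with $(1+p^2)^{1/2}$; with the wrong power your chain $(1+p^2)^{1/2}/(K_T^{\Delta_{\ell_0}})^2\lesssim (1+|k|)^{-1}$ is in fact false for large $|p|$. Second, the step ``$K_T^{\Delta_{\ell_0}}\ge E_{\ell_0}\gtrsim 1+|k|$'' is wrong as written: $E_{\ell_0}$ can be arbitrarily small on the Fermi surface as $T\to T_c$. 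What is true (and what the paper uses) is $K_T^{\Delta_{\ell_0}}\ge \max(2T,|k|)\gtrsim 1+|k|$, with a constant depending on $T$ being bounded away from zero---which is harmless for the application near $T_c$.
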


\begin{proof}
We will make use of the following observation, which is implied by the fact that the function $\vert \Delta_{\ell_0} \vert \mapsto \vert \Delta_{\ell_0} \vert/\smash{K^{\Delta_{\ell_0}}_{T}}$ is strictly increasing. Eq.~\eqref{eq:gapfunction} implies that 
\begin{align} \label{est:DeltaValpha}
\| \Delta_{\ell_0} \|_{\infty} \leq \| V \|_2 \| \hat{\alpha}_{\ell_0} \|_2.
\end{align}
We will abbreviate $\| V \|_2 \| \hat{\alpha}_{\ell_0} \|_2$ by $c(\alpha_{\ell_0})$ in the following.
Thus, together with \eqref{eq:ELalpha}, the just mentioned monotonicity of $\vert \Delta_{\ell_0} \vert/\smash{K^{\Delta_{\ell_0}}_{T}}$ implies that
\begin{align*}
\vert \hat{\alpha}_{\ell_0}(p) \vert \leq \frac{c(\alpha_{\ell_0})}{2 K^{c(\alpha_{\ell_0})}_{T}(p)}
\end{align*}
for all $p \in \R^2$. By taking the square and integrating, we see that
\begin{align*}
1 & \leq \frac{\| V \|_2^2}{4} \int_{\R^2} \left(K^{c(\alpha_{\ell_0})}_{T}(p)\right)^{-2} \, \td p.
\end{align*}
Next, we use that $\tanh(x) \leq 1$ for all $x$, which leads to
\begin{align*}
1 & \leq \frac{\| V \|_2^2}{4} \int_{\R^2} \left((p^2 - \mu)^2 +\| V \|_2^2 \| \hat{\alpha}_{\ell_0} \|_2^2\right)^{-1} \, \td p
\end{align*}
We may assume that $\| V \|_2^2 \| \alpha_{\ell_0} \|_2^2 \geq \mu^2$ and conclude that
\begin{align*}
1 \leq \frac{\| V \|_2^2}{4} \int_{\R^2} \left( p^4/2  - \mu^2 + \| V \|_2^2 \| \hat{\alpha}_{\ell_0} \|_2^2\right)^{-1} \, \td p.
\end{align*}
From this estimate one easily derives that
\begin{align*} 
\| \hat{\alpha}_{\ell_0} \|_2^2 \leq \frac{\| V \|_2^2 \pi^4}{32} + \frac{\mu^2}{\| V \|_2^2}.
\end{align*} 
Making use of \eqref{est:DeltaValpha}, we see that this directly implies that
\begin{align} \label{est:deltauni}
\| \Delta_{\ell_0} \|_{\infty}^2 \leq \frac{\| V \|_2^4 \pi^4}{32} + \mu^2.
\end{align} 
In other words, there exists a constant $m > 0$ that only depends on $V$ and $\mu$, such that $\vert \Delta_{\ell_0}(p) \vert < m$ for all $p \in \R^2$. In particular, $m$ does not depend on $T$.

We have to estimate $\smash{K_T^{\Delta_{\ell_0}}} - K_T$ from below. We recall that $\vert \Delta_{\ell_0} \vert \mapsto \smash{K_T^{\Delta_{\ell_0}}}/\vert \Delta_{\ell_0} \vert^2$ is decreasing. Having in mind that $\smash{K_T^{\Delta}}-K_T$ behaves like $\vert \Delta \vert^2$ for small $\vert \Delta \vert$ we thus estimate 
\begin{align*}
\frac{K_T^{\Delta_{\ell_0}} - K_T}{\vert \Delta_{\ell_0} \vert^2}\vert \Delta_{\ell_0} \vert^2 \gtrsim  \left(\frac{K_T^m - K_T}{m^2} \right) \vert \Delta_{\ell_0} \vert^2 .
\end{align*}
Abbreviating $y_t = \sqrt{k(p)^2 + tm^2}/(2T)$ we find that
\begin{align} \label{est:below1}
K_T^{\Delta_{\ell_0}}(p) - K_T(p) & = 2T \int_0^1 \frac{\td}{\td t} \frac{y_t}{\tanh\left(y_t\right)} \, \td t \notag \\ 
& = \frac{m^2}{4T} \int_0^1 \left( \frac{1}{y_t \tanh(y_t)} - \frac{1}{\sinh^2(y_t)} \right) \td t.
\end{align}
As one easily sees, the function 
\begin{align*}
g(y) = \frac{1}{y}\frac{1}{\tanh(y)} - \frac{1}{\sinh^2(y)}
\end{align*}
is decreasing, which implies
\begin{align*}
 K_T^{\Delta_{\ell_0}}(p) - K_T(p) \gtrsim \frac{m^2}{4T}\left( \frac{1}{y_1 \tanh(y_1)} - \frac{1}{\sinh^2(y_1)} \right).
\end{align*}
Moreover, $g$ is bounded from below by $g(y) \geq 2/3 \,(1 + y)^{-1}$.
Together with \eqref{est:below1} this shows that
\begin{align} \label{est:below2}
K_T^{\Delta_{\ell_0}}(p) - K_T(p) \gtrsim \vert \Delta_{\ell_0}(p) \vert^2 \frac{1}{1+p^2}.
\end{align}
Next, we make use of the Euler-Lagrange equation of $\F_{\ell_0}^{\mathrm{ti}}$, that is the relation $\vert\Delta_{\ell_0}(p)\vert = 2\smash{K_T^{\Delta_{\ell_0}}}(p)\vert \hat{\alpha}_{\ell_0}(p)\vert$. Inserting this identity in \eqref{est:below2} we see that
\begin{align*}
K_T^{\Delta_{\ell_0}}(p) - K_T(p) \gtrsim \left(K_T^{\Delta_{\ell_0}}(p) \right)^2 \frac{\vert \hat{\alpha}_{\ell_0}(p) \vert^2}{1 + p^2} \gtrsim \left( 1 + p^2 \right) \vert \hat{\alpha}_{\ell_0}(p) \vert^2,
\end{align*}
which implies the statement.
\end{proof}

\begin{lemma} \label{lem:AB2} 
Let $T \in (0,T_c)$. If $\alpha_{\ell_0}$ is a solution of the BCS gap equation in the form~\eqref{eq:gapalpha}, then $\| \alpha_{\ell_0}\|_2 \lesssim (T_c - T)^{1/2}$. In particular, $\|\Delta_{\ell_0} \|_{\infty} \lesssim  (T_c - T)^{1/2}$.
\end{lemma}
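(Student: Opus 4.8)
The plan is to test the gap equation \eqref{eq:gapalpha} against $\alpha_{\ell_0}$ itself and to compare $K_T^{\Delta_{\ell_0}}$ with $K_{T_c}$, exploiting that $K_{T_c}+V\ge 0$ by the very definition of $T_c$. Writing
\begin{align*}
0 &= \left\langle \alpha_{\ell_0}, \left( K_T^{\Delta_{\ell_0}} + V \right)\alpha_{\ell_0} \right\rangle \\
&= \left\langle \alpha_{\ell_0}, \left( K_T^{\Delta_{\ell_0}} - K_T \right)\alpha_{\ell_0} \right\rangle - \left\langle \alpha_{\ell_0}, \left( K_{T_c} - K_T \right)\alpha_{\ell_0} \right\rangle + \left\langle \alpha_{\ell_0}, \left( K_{T_c} + V \right)\alpha_{\ell_0} \right\rangle,
\end{align*}
and using $K_{T_c}+V\ge 0$ together with the bound $\| K_{T_c}-K_T\| \lesssim T_c-T$ from Lemma~\ref{lem:AB1}, I would first obtain
\begin{align*}
\left\langle \alpha_{\ell_0}, \left( K_T^{\Delta_{\ell_0}} - K_T \right)\alpha_{\ell_0} \right\rangle \lesssim (T_c-T)\,\| \alpha_{\ell_0}\|_2^2 .
\end{align*}
Feeding this into Lemma~\ref{lem:AB1/2} yields the weighted estimate $\int_{\R^2}(1+p^2)\,|\hat\alpha_{\ell_0}(p)|^4 \, \td p \lesssim (T_c-T)\,\| \alpha_{\ell_0}\|_2^2$.

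The next step is to convert this into an $L^2$ bound on $\hat\alpha_{\ell_0}=\sigma_{\ell_0}$ (which by Plancherel has the same norm as $\alpha_{\ell_0}$). On a ball $\{|p|\le R\}$, Cauchy--Schwarz and the previous display give $\int_{|p|\le R}|\hat\alpha_{\ell_0}|^2 \lesssim R\,(T_c-T)^{1/2}\,\| \alpha_{\ell_0}\|_2$. For the complementary region I would use the Euler--Lagrange equation in the form $|\hat\alpha_{\ell_0}(p)| = |\Delta_{\ell_0}(p)|/(2K_T^{\Delta_{\ell_0}}(p))$ (from \eqref{eq:ELalpha}) together with $K_T^{\Delta_{\ell_0}}(p)\ge E_{\ell_0}(p)\ge |p^2-\mu|$ and the bound $\|\Delta_{\ell_0}\|_\infty \le \|V\|_2\|\hat\alpha_{\ell_0}\|_2$ from \eqref{est:DeltaValpha}; this gives $|\hat\alpha_{\ell_0}(p)| \lesssim \|V\|_2\,\| \alpha_{\ell_0}\|_2\,|p|^{-2}$ once $|p|$ is larger than a constant depending only on $\mu$, and hence $\int_{|p|>R}|\hat\alpha_{\ell_0}|^2 \lesssim \|V\|_2^2\,R^{-2}\,\| \alpha_{\ell_0}\|_2^2$. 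Adding the two contributions,
\begin{align*}
\| \alpha_{\ell_0}\|_2^2 \lesssim R\,(T_c-T)^{1/2}\,\| \alpha_{\ell_0}\|_2 + \frac{c(V)}{R^2}\,\| \alpha_{\ell_0}\|_2^2 ,
\end{align*}
and choosing $R$ to be a fixed constant (depending only on $V$ and $\mu$, in particular not on $T$) large enough that $c(V)R^{-2}\le 1/2$, the last term can be absorbed into the left-hand side, leaving $\| \alpha_{\ell_0}\|_2 \lesssim (T_c-T)^{1/2}$. The claim $\|\Delta_{\ell_0}\|_\infty \lesssim (T_c-T)^{1/2}$ then follows at once from \eqref{est:DeltaValpha} and Plancherel.

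The one point requiring genuine care is the passage from the weighted $L^4$ estimate to the $L^2$ bound: the $L^4$ control alone does not see the $L^2$ norm on unbounded regions (the weight $(1+p^2)^{-1}$ is not integrable in two dimensions), so one really needs the a priori $|p|^{-2}$ decay of $\hat\alpha_{\ell_0}$ coming from the gap equation, and the cutoff radius $R$ must be chosen independently of $T$ so that the tail contribution can be swallowed by the left-hand side. Everything else is a direct combination of Lemmas~\ref{lem:AB1} and \ref{lem:AB1/2} with the operator inequality $K_{T_c}+V\ge 0$.
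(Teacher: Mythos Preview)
Your proof is correct and shares its first stage with the paper's: testing \eqref{eq:gapalpha} against $\alpha_{\ell_0}$ and combining Lemma~\ref{lem:AB1} with $K_{T_c}+V\ge 0$ yields $\langle \alpha_{\ell_0},(K_T^{\Delta_{\ell_0}}-K_T)\alpha_{\ell_0}\rangle\lesssim (T_c-T)\|\alpha_{\ell_0}\|_2^2$, after which Lemma~\ref{lem:AB1/2} gives the weighted $L^4$ bound. The divergence comes in how the argument is closed. The paper interpolates via H\"older, $\|\hat\alpha_{\ell_0}\|_r\le \|(1+p^2)^{-1/4}\|_s\,\|(1+p^2)^{1/4}\hat\alpha_{\ell_0}\|_4$ for some $r>2$, then uses the extra hypothesis $\hat V\in L^t$ with $t\in[1,2)$ (Young's inequality) to get $\|\Delta_{\ell_0}\|_\infty\lesssim\|\hat\alpha_{\ell_0}\|_r$; combined with $\|\hat\alpha_{\ell_0}\|_2\lesssim\|\Delta_{\ell_0}\|_\infty$ from \eqref{eq:ELalpha} this closes the loop. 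Your low/high-frequency splitting bypasses all of this: Cauchy--Schwarz on the ball uses the $L^4$ bound directly, and the tail is handled by the pointwise decay $|\hat\alpha_{\ell_0}(p)|\lesssim \|V\|_2\|\alpha_{\ell_0}\|_2\,|p|^{-2}$, which needs only $V\in L^2$ via \eqref{est:DeltaValpha}. Your route is more elementary and in fact \emph{dispenses with the assumption $\hat V\in L^r$, $r<2$}, which the paper itself flags (Remark after Theorem~\ref{thm:vollesFunktional}) as a technical artifact needed precisely at this step. The paper's approach, for its part, packages everything into global norm inequalities and avoids the explicit cutoff.
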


\begin{proof}
The gap equation, see~\eqref{eq:gapalpha}, can be written as
\begin{align*}
\langle \alpha_{\ell_0}, \left( K_{T_c} + V \right) \alpha_{\ell_0}\rangle + \langle \alpha_{\ell_0}, B\alpha_{\ell_0} \rangle = \langle \alpha_{\ell_0}, A \alpha_{\ell_0}\rangle,
\end{align*}
where we use the notation introduced in the proof of Lemma~\ref{lem:AB1} but drop the subscript, i.e. $A = A_T$ and $B = B_T$ for brevity.
Lemma \ref{lem:AB1} and the definition of $T_c$ imply that 
\begin{align} \label{est:B}
 \langle \alpha_{\ell_0}, B \alpha_{\ell_0} \rangle \leq  \langle \alpha_{\ell_0}, A \alpha_{\ell_0}\rangle \lesssim  (T_c - T) \|\alpha_{\ell_0}\|_2^2. 
\end{align}
From the combination of Lemma~\ref{lem:AB1/2} and ~\eqref{est:B} we deduce that
\begin{align*}
\| \left(1 + p^2\right)^{1/4} \hat{\alpha}_{\ell_0} \|_4^4 \lesssim   (T_c - T)  \| \alpha_{\ell_0} \|_2^2.
\end{align*}
On the other hand, the $L^r(\R^2)$-norm of~$\hat{\alpha}$ is bounded from above by
\begin{align*}
\| \hat{\alpha}_{\ell_0} \|_r \leq \| \left( 1 + p^2\right)^{-1/4} \|_s \| \left( 1 + p^2\right)^{1/4} \hat{\alpha}_{\ell_0} \|_4,
\end{align*}
where $r > 2$, due to the fact that we have to choose $s > 4$. Thus, 
\begin{align} \label{eq:alpha42}
\| \hat{\alpha}_{\ell_0} \|_r^4 \lesssim   (T_c - T) \| \hat{\alpha}_{\ell_0} \|_2^2. 
\end{align}
Furthermore, we conclude from the relation between~$\Delta_{\ell_0}$ and~$\alpha_{\ell_0}$ given by Eq.~\eqref{eq:gapfunction} that
\begin{align} \label{eq:DeltaAlpha}
\| \Delta_{\ell_0} \|_{\infty} \lesssim \| \hat{V} \|_t \| \hat{\alpha}_{\ell_0} \|_r,
\end{align}
where we choose $r > 2$ and $t \in [1,2)$ appropriately. Note that the gap equation in the form~\eqref{eq:ELalpha} implies that $\| \hat{\alpha}_{\ell_0} \|_2 \lesssim \| \Delta_{\ell_0} \|_{\infty}$. Together with~\eqref{eq:alpha42} and~\eqref{eq:DeltaAlpha} this finally shows that
\begin{align*}
\| \hat{\alpha}_{\ell_0} \|_2 \lesssim (T_c - T)^{1/4} \| \hat{\alpha}_{\ell_0} \|_2^{1/2}
\end{align*}
and hence proves the first part of the claim. In order to get the estimate on $\| \Delta_{\ell_0} \|_{\infty}$, we go back to \eqref{eq:alpha42} and insert $\| \alpha_{\ell_0} \|_2 \lesssim (T_c - T)^{1/2}$. Together with \eqref{eq:DeltaAlpha} this yields the statement. 
\end{proof}

Let $T \in (0, T_c)$ and $z \in \C\setminus\R$. 
Taken together, Lemma \ref{lem:AB1} and Lemma \ref{lem:AB2} show that
\begin{align*}
& \left\|\left( z - \left( K_{T_c} + V \right)\right)^{-1} - \left( z - \left( K^{\Delta_{\ell_0}}_{T} + V\right)\right)^{-1}\right\| \\
& \ \ \ \leq \left\|\left( z - \left( K_{T_c} + V \right)\right)^{-1} \right\| \left\|  K^{\Delta_{\ell_0}}_{T} - K_{T_c} \right\| \left\| \left(z - \left( K^{\Delta_{\ell_0}}_{T} + V\right)\right)^{-1} \right\| \\
& \ \ \ \lesssim \vert \operatorname{Im}(z) \vert^{-2}  (T_c - T)^{1/2}.
\end{align*}
In other words, $\smash{K^{\Delta_{\ell_0}}_{T}} + V \to K_{T_c} + V$ for $T \to T_c$ in norm resolvent sense for an arbitrary $z \in \C\setminus \R$ and consequently for all $z \in \rho(K_{T_c} + V)$.

We are now prepared for the proof of Proposition~\ref{prop:nonnegativity}.

\begin{proof}[Proof of Proposition \ref{prop:nonnegativity}]
We consider the case $\ell_0 \neq 0$. The proof for the case $\ell_0 = 0$ is analogous.
As illustrated in Figure 1, we have by assumption that $T_c = T_c(\ell_0)$ and that the lowest eigenvalue of $K_{T_c} + V$ is exactly twice degenerate. Note that in the case that $\ell_0 = 0$ the smallest eigenvalue is non-degenerate. From the convergence of $\smash{K^{\vphantom{b}\Delta_{\ell_0}}_{T} + V}$ to $K_{T_c} + V$ in norm resolvent sense one concludes that the lowest eigenvalue of $\smash{K^{\vphantom{b}\Delta_{\ell_0}}_{T} + V}$ is stable.

In particular, this tells us that there exists~$\tilde{T} < T_c$ such that $\smash{K^{\Delta_{\ell_0}}_{T} + V}$ with $T \in (\tilde{T}, T_c]$ has exactly two eigenvalues $\lambda_1(T), \lambda_2(T)  \in  \{ z \in \C \vert \ \vert z \vert < r \}$ for some radius $r > 0$. 
Combining this with the fact that the Euler-Lagrange equation \eqref{eq:gapalpha} of~$\F_{\ell_0}^{\mathrm{ti}}$ reads 
\begin{align} \label{eq:ELKT}
(K^{\Delta_{\ell_0}}_T + V) \alpha = 0,
\end{align}
we conclude that 
$\lambda_1(T) =  \lambda_2(T) = 0$. Having in mind that $\smash{K^{\Delta_{\ell_0}}_T}$ is an  increasing function of $T$ and of $\Delta_{\ell_0}$, what we have seen by this argument is that the effects of these monotonicity properties exactly correspond.
In other words, we have shown that there exists $\tilde{T} < T_c$ such that $\smash{K^{\Delta_{\ell_0}}_T} + V$ is nonnegative for all $T \in [\tilde{T}, T_c]$. It is not hard to see that $\tilde{T}$ can be chosen as pointed out in Remark \ref{rem:ell1}.
\end{proof}

\begin{proof}[Proof of Theorem \ref{thm:vollesFunktional}] We know from Lemma~\ref{lem:extmin} that for~$\ell_0$ determined by $T_c(\ell_0) = \max_{\ell \in 2\N} T_c(\ell)$ the functional~$\F_{\ell_0}^{\mathrm{ti}}$ has a minimizer $(\gamma_{\ell_0}, \sigma_{\ell_0})$. Proposition~\ref{prop:nonnegative} and Proposition~\ref{prop:nonnegativity} show that for~$\Gamma_{\ell_0}$ given by $(\gamma_{\ell_0}, \alpha_{\ell_0})$, with $\alpha_{\ell_0}$ as in \eqref{eq:formofalpha},
\begin{align*}
\F(\Gamma) - \F(\Gamma_{\ell_0}) \geq 0,
\end{align*}
holds for all $\Gamma \in \mathcal{D}$.
Moreover, if $\F(\Gamma) - \F(\Gamma_{\ell_0}) = 0$, then $\gamma = \gamma_{\ell_0}$ and $\alpha \in \ker(K_T^{\Delta_{\ell_0}} + V_y)$ by Proposition~\ref{prop:nonnegative}. Consequently, $\alpha$ takes the form $\alpha = \psi_1\alpha_{\ell_0} + \psi_2\alpha_{-\ell_0}$, where $\alpha_{\pm \ell_0}(p) = e^{\pm i \ell\varphi} \sigma_{\ell_0}(p)$ and $\psi_1$ and $\psi_2$ denote complex constants. It remains to show that either $\psi_1 = 0$ and $\vert \psi_2 \vert = 1$ or $\vert \psi_1 \vert = 1$ and $\psi_2 = 0$. Observe that, in particular, $(\gamma_{\ell_0}, \alpha) \in \D^{\mathrm{ti}}$ and as we know that~$\F^{\mathrm{ti}}$ has a minimizer, we conclude that $(\gamma_{\ell_0}, \alpha)$ satisfies the Euler-Lagrange equation of~$\F^{\mathrm{ti}}$, that is
\begin{align*}
\gamma_{\ell_0}(p) = \frac{1}{2} - \frac{p^2 - \mu}{2K_T^{\Delta}(p)},
\end{align*}
where $\Delta = \pi^{-1} \hat{V}\ast \hat{\alpha}$. Hence $\vert\Delta\vert$ is a radial function and consequently either $\psi_1= 0$ or $\psi_2 = 0$. In other words, $(\gamma_{\ell_0}, \sigma_{\ell_0}) \in \D_{\ell_0}$. Thus, in order to find minimizers of~$\F$, it is sufficient to find the minimizers of~$\F_{\ell_0}^{\mathrm{ti}}$. As we know that~$\F_{\ell_0}^{\mathrm{ti}}$ has minimizers, the only thing left to show is that $(\gamma_{\ell_0}, \sigma_{\ell_0})$ is, up to a phase, the only minimizer of~$\F_{\ell_0}^{\mathrm{ti}}$. The fact that other possible minimizers $(\gamma_{\ell_0}, \psi\sigma_{\ell_0})$, for some $\psi \in \C$, have to satisfy the gap equation~\eqref{eq:gap} of~$\F_{\ell_0}^{\mathrm{ti}}$ reads 
\begin{align*}
\left(K_T^{\psi \Delta_{\ell_0}} + V_{\ell_0} \right) \left( \psi \sigma_{\ell_0}\right) = 0.
\end{align*}
Together with the monotonicity of $K_T^{\psi \Delta_{\ell_0}}$ in~$\psi$ this implies that $\vert \psi \vert$ = 1. 
\end{proof}

The proof of Theorem~\ref{thm:3dim} is analogous to the proof of Theorem~\ref{thm:vollesFunktional} with one exception. 

\begin{proof}[Proof of Theorem~\ref{thm:3dim}]
In case $\ell_0 = 0$ all given arguments also apply in the three-dimensional case. The only exception is Lemma~\ref{lem:AB2}, where we need to modify the assumptions slightly. One easily sees that $\hat{V} \in L^r(\R^3)$ with $r \in [1, 12/7)$ is a sufficient assumption in this case. 
\end{proof}

\begin{proof}[Proof of Proposition~\ref{prop:mainproposition}]
	We will carry out the proof for $d=3$ and afterwards comment on the case $d=2$. The Cooper-pair wave function of any minimizer of the translation-invariant BCS functional satisfies $\hat{\alpha}(p) = -\Delta(p)/(2K_T^{\Delta}(p))$ which is implied by the Euler-Lagrange equation of~$\F$, see \cite{HHSS08} or compare with Section~\ref{sec:prep}. Hence, $\vert \hat{\alpha}\vert$ is radial if and only if $\vert \Delta \vert$ is radial. With Eq.~\eqref{eq:proposition1} and the assumption that $V$ is a radial function, one checks that it is sufficient to show
	\begin{equation}
	\left\langle U(R) \alpha, K_T^{\Delta}U(R)\alpha  \right\rangle < \left\langle \alpha, K_T^{\Delta}\alpha  \right\rangle.
	\label{eq:rearrangement2}
	\end{equation}
	Using the above relation between $\hat{\alpha}$ and $\Delta$, we write
	\begin{align}
	\left\langle U(R)\alpha, K_T^{\Delta}U(R)\alpha  \right\rangle & = \frac{1}{4} \int_{\R^3} \frac{\vert\Delta(p)\vert^2}{K_T^{\Delta}(p)^2}K_T^{\Delta}(Rp)\, \td p \nonumber \\
	& = \frac{1}{4} \int_0^{\infty} \int_{\Omega_r} \frac{\vert\Delta(p)\vert^2}{K_T^{\Delta}(p)^2}K_T^{\Delta}(Rp)\, \td \omega(p) \, r^2 \td r, \nonumber
	\end{align}
	where $\Omega_r$ denotes the sphere with radius $r$ and $\td \omega(p)$ denotes the uniform measure on $\Omega_r$. On $\Omega_r$, that is, for fixed radius $r = \vert p\vert$, we can understand $\vert\Delta(p)\vert^2/K_T^{\Delta}(p)^2$ as a function $f$ that depends only on $\vert \Delta(p) \vert$. There also exists a function $g$ such that $K_T^{\Delta}(Rp) = g(\vert \Delta(Rp) \vert)$ for all $p \in \Omega_r$. The functions $f$ and $g$ are both strictly increasing. 
	
	Consider the expression
	\begin{equation}
	M(R) := \int_{\Omega_r} [g(\Delta(Rp)) - g(\Delta(p))][f(\Delta(Rp)) - f(\Delta(p))] d \omega(p) \nonumber
	\end{equation}
	The functions $f$ and $g$ depend only on the magnitude of $\Delta(Rp)$ resp. $\Delta(p)$.
	Since $f$ and $g$ are strictly increasing we have that $M(R)>0$ unless $|\Delta(Rp)| = |\Delta(p)|$ for a.e. $p$.
	To see this assume that $|\Delta(Rp)|$ and $|\Delta(p)|$ differ on a set of positive measure.
	Now consider the set $\{p: |\Delta(Rp)| > |\Delta(p)|\}$ and the set $\{p: |\Delta(Rp)| < |\Delta(p)|\}$
	At least one of them  must have positive measure. Hence on the union of these sets
	\begin{equation}
	[g(\Delta(Rp)) - g(\Delta(p))][f(\Delta(Rp)) - f(\Delta(p))] >0 \nonumber
	\end{equation}
	since $f$ and $g$ are both strictly increasing.
	Using the rotation invariance of the measure $\omega$, we find
	\begin{align}
	0< M(R)= 2\int_{\Omega_r} g(\Delta(p))f(\Delta(p)) d\omega(p)  &- \int_{\Omega_r} g(\Delta(p))f(\Delta(Rp)) d \omega(p)  \nonumber \\
	&-\int_{\Omega_r} g(\Delta(Rp) ) f(\Delta(p))] d \omega(p) \nonumber
	\end{align}
	and hence one of the integrals
	\begin{equation}
	\int_{\Omega_r} g(\Delta(p))f(\Delta(Rp)) d \omega(p) \nonumber
	\end{equation}
	or 
	\begin{equation}
	\int_{\Omega_r} g(\Delta(Rp) ) f(\Delta(p)) d \omega(p) \nonumber
	\end{equation}
	must be strictly below
	\begin{equation}
	\int_{\Omega_r} g(\Delta(p))f(\Delta(p)) d\omega(p). \nonumber
	\end{equation}
	Accordingly, there exists a $R \in SO(3)$ such that
	\begin{equation}
	\int_{\Omega_r} \frac{\vert\Delta(p)\vert^2}{K_T^{\Delta}(p)^2}K_T^{\Delta}(Rp)\, \td \omega(p) < \int_{\Omega_r} \frac{\vert\Delta(p)\vert^2}{K_T^{\Delta}(p)^2}K_T^{\Delta}(p)\, \td \omega(p). \label{eq:proposition2}
	\end{equation}
	To conclude that Eq.~\eqref{eq:rearrangement2} holds, it suffices to note that $\Delta$ is a continuous function, see the first paragraph in the proof of \cite[Proposition~3]{HHSS08}, which implies that both sides of Eq.~\eqref{eq:proposition2} are continuous functions of $r$. If $d=2$ the proof goes through in the same way with the only difference that the continuity of $\Delta$ is concluded from $\Delta(p) = \pi^{-1} \hat{V}\ast\hat{\alpha}(p)$, the assumption that $V \in L^2(\mathbb{R}^2)$ and the Riemann-Lebesgue Lemma.
\end{proof}

\subsection*{Acknowledgments}
The paper was partially supported by the GRK 1838 and the Humboldt foundation.
M.L. was partially supported by NSF grant DMS-1600560. Partial financial support by the European Research Council (ERC) under the European Union's Horizon 2020 research and innovation programme (grant agreement No 694227) is gratefully acknowledged (A.D.). We are grateful for the hospitality at the Department of Mathematics at the University of T\"{u}bingen (M.L.) and at the Georgia Tech School of Mathematics (A.G.).

\end{document}